\newtheorem{theorem}{Theorem}[section]
\newtheorem{prop}{Proposition}[section]
\newtheorem{remark}[theorem]{Remark}
\newtheorem{definition}[theorem]{Definition}
\newcommand{\problemtitle}[1]{\gdef\@problemtitle{#1}}
\newcommand{\probleminput}[1]{\gdef\@probleminput{#1}}
\newcommand{\problemquestion}[1]{\gdef\@problemquestion{#1}}
  \par\addvspace{.5\baselineskip}
  \par\addvspace{.5\baselineskip}
\newcommand\notsotiny{\@setfontsize\notsotiny{6.3}{6.3}}
\title{On path ranking in time-dependent graphs}
\author{Tommaso Adamo, Gianpaolo Ghiani, Emanuela Guerriero,\\
Dipartimento di Ingegneria per l'Innovazione, \\Universit\`{a} del Salento, Lecce, Italia}
\date{}
\begin{document}

\maketitle
\begin{abstract}
In this paper we study a property of time-dependent graphs, dubbed "path ranking invariance''. Broadly speaking, a time-dependent graph is "path ranking invariant'' if the ordering of its paths (w.r.t. travel time) is independent of the start time. In this paper we show that, if a graph is path ranking invariant, the solution of a large class of time-dependent vehicle routing problems can be obtained by solving suitably defined (and simpler) time-independent routing problems. We also show how this property can be checked by solving a linear program. If the check fails, the solution of the linear program can be used to determine a tight lower bound. In order to assess the value of these insights, the lower bounds have been embedded into an enumerative scheme. Computational results on the time-dependent versions of the \textit{Travelling Salesman Problem} and the \textit{Rural Postman Problem} show that the new findings allow to outperform state-of-the-art algorithms.
 
\end{abstract}

{{\it Keywords:} time-dependent routing, path ranking invariance}.
\section{Introduction} \label{intro}
Vehicle routing is concerned with the design of "least cost" routes for a fleet of vehicles, possibly subject to side constraints, such as vehicle capacity or delivery time windows. 
According to \cite{toth2014vehicle}, the vast majority of the literature is based on the assumption that the data used to formulate the problems do not depend on time. Only in recent years there has been a flourishing of scholarly work in a time-dependent setting (\cite{gendreau2015time}). \\
\indent The main goal of this paper is to study a fundamental property of time-dependent graphs that we call \textit{path ranking invariance}. As detailed in the following, a time-dependent graph is path ranking invariant if the ordering of its paths w.r.t. travel duration is not dependent on the start travel time. We demonstrate that this property can be exploited to solve a large class of time-dependent routing problems including the \textit{Time-Dependent Travelling Salesman Problem} (TDTSP) and the \textit{Time-Dependent Rural Postman Problem} (TDRPP).
We prove that if a graph is path ranking invariant, then the ordering of the solutions of these problems, w.r.t. travel duration, is the same as a suitably-defined time-independent counterpart. In order to determine sufficient conditions for path ranking invariance, we introduce a \textit{decision problem}, named the \textit {Constant Traversal Cost Problem} (CTCP). A \textit{decision problem} is a problem with a yes-or-no answer (\cite{arora2009computational}). We prove that if a time-dependent graph is a yes-instance of the CTCP, then it is path ranking invariant. Then, the decidability of the CTCP is demonstrated by devising a certificate-checking algorithm based on the solution of a linear program. Finally, we show that, if the CTTP feasibility check fails, our results can be used to determine an auxiliary path ranking invariant graph, where the travel time functions are lower approximations of the original ones. Such less congested graph can be used to determine lower bounds. We evaluate the benefits of this new approach on both the TDTSP and the TDRPP.

\indent The paper is organized as follows. Section \ref{rev} summaries the literature. In Section \ref{sec_1}, we formally define the path ranking invariance property and discuss its relationship with optimality conditions of the \textit{Time-Dependent General Routing Problem} (TDGRP) which includes the TDTSP and TDRPP as special cases.  In Section \ref{sec_2}, we introduce a parameterized family of travel cost functions and define a set of sufficient conditions for path ranking invariance. In Section \ref{sec_3}, we define the  Constant Traversal Time Problem. In particular, we prove the decidability of the CTCP by devising an algorithm that correctly decides if a time-dependent graph is a yes-instance. In Section \ref{sec_4}, we define a lower bounding procedure. In Section \ref{sec_5},  we discuss the benefits obtained when this new approach is embedded into state-of-the-art algorithms for the the TDTSP and TDRPP.  Finally, some conclusions follow in Section \ref{sec_6}.

\section{State of the art}\label{rev}
The literature on time-dependent routing problems is quite scattered and disorganized. In this section, we present a brief review of the \textit{Time-Dependent Travelling Salesman Problem} (TDTSP) and the \textit{Time-Dependent Rural Postman Problem} (TDRPP) which are used in this paper to test the computational potential of the path ranking invariance property. For a complete survey on time-dependent routing, see \cite{gendreau2015time}.

\subsection{Time-Dependent Travelling Salesman Problem}
 \cite{malandraki1992} were the first to address the TDTSP and proposed a \textit{Mixed Integer Programming} (MIP) formulation. Then \cite{Malandraki199645} 
devised an approximate dynamic programming algorithm while \cite{Li2005} developed two heuristics. \cite{schneider2002} proposed a simulated annealing heuristic and \cite{Harwood:2013aa} presented some metaheuristics.
\cite{Cordeau2014} derived some properties of the TDTSP as well as lower and upper bounding procedures. They also represented the TDTSP as MIP model for which they developed some families of valid inequalities. These
inequalities were then used into a branch-and-cut algorithm that solved instances with up to 40 vertices. 
\cite{arigliano2018branch} exploited some properties of the problem and developed a branch-and-bound
algorithm which outperformed the \cite{Cordeau2014} branch-and-cut procedure.  \cite{melgarejo2015time} presented a new global constraint that was used in a \textit{Constraint Programming} approach. This algorithm was able to solve instances with up to 30 customers. Recently, \cite{ADAMO2020104795} proposed a parameterized family of lower bounds, whose parameters are chosen by fitting the traffic data. When embedded into a branch-and-bound procedure, their  lower bounding mechanism allows to solve to optimality a larger number of instances than \cite{arigliano2018branch}.\\ 
\indent Variants of the TDTSP have been examined by \cite{Albiach2008789}, \cite{arigliano2018time}, \cite{montero2017integer} and \cite{Boland2020} (\textit{TDTSP with Time Windows}), by \cite{Helvig2003153} (\textit{Moving-Target TSP}) and by \cite{Montemanni1528314} (\textit{Robust TSP with Interval Data}). \\
Finally, it is worth noting that a scheduling problem, other than the above defined TDTSP, is also known as
Time-Dependent Travelling Salesman Problem. It amounts to sequence a set of jobs on a single machine in which the processing times depend on the position of the jobs within the schedule (\cite{picard1978}, \cite{fox1980},
\cite{gouveia1995}, \cite{vanderwiel1996},
\cite{miranda2010}, \cite{stecco2008branch}, \cite{godinho2014natural}).\\

\subsection{Time-Dependent Rural Postman Problem}
\cite{Tan2011} were the first to propose an exact algorithm for the TDRPP. They devised an integer linear programming model, based on an arc-path formulation enforced by the introduction of valid inequalities. The computational results showed that no instance was solved to optimality, with a relative gap between the best feasible solution and the lower bound equal to $3.16\%$ on average. \cite{calogiuri2019branch} provided both a lower bound and an upper bound with a worst-case guarantee. The proposed bounds were embedded into a branch-and-bound algorithm that was able to solve TDRPP instances with up to 120 arcs with a percentage of required arcs equal to $70\%$.\\
\indent When the set of required vertices is empty, but all arcs of the time-dependent graph have to be visited, the TDGRP reduces to the  \textit{Time-Dependent Chinese Postman problem} (TDCPP). An integer programming formulation for solving the TDCPP  was proposed by \cite{sun2011new}. \cite{sun2011dynamic} also proved that the TDCPP is NP-hard and they proposed a dynamic programming algorithm for solving it. A linear integer programming formulation, namely the cycle-path formulation, was presented by \cite{sun2015integer}.

\section{The path ranking invariance property} \label{sec_1}
Let $G := (V,A,\tau)$ be a directed and connected graph, where $V$ is the set of vertices, $A := \{(i, j) : i \in V, j \in V\}$ is the set of arcs. Moreover, let $\tau:A\times\mathbb{R}^+\rightarrow \mathbb{R}$ denote a function that associates to each arc $(i,j)\in A$ and starting time $t\in[0,+\infty)$  the traversal time when a vehicle leaves the vertex $i$ at time $t$. In particular\textcolor{black}{,} we suppose that it is given a planning horizon $[0,T]$ and the travel time functions are constant in the long run, that is $\tau(i,j,t):=\tau(i,j,T)$ with $t\geq T$. For the sake of notational simplicity, we use $\tau_{ij}(t)$ to designate $\tau(i,j,t)$. 
We suppose that  traversal time $\tau_{ij}(t)$ satisfy the \textit{first-in-first-out} (FIFO) property, i.e., leaving the vertex $i$ later implies arriving later at vertex $j$. In the following we denote with $\mathcal{T}_{\tau}(i,j)$, the ordered set of time instants corresponding to the breakpoints of $\tau_{ij}(t)$.\\
\indent For any given path $p_k := (i_0, i_1,\dots, i_k)$, the corresponding duration  $z_\tau(p_k , t)$ can be computed recursively as:
\begin{equation}\label{z_tau}
z(p_k,t):=z(p_{k-1},t)+\tau_{i_{k-1}i_k}(z(p_{k-1},t)),
\end{equation}
with the initialization $z(p_0,t):=0$. 
\begin{definition}[\textbf{Path dominance rule}]
Given two paths $p'$ and $p''$ of $G$ and their traversal time functions, $z(p',t)$ and $z(p'',t)$ respectively, we say that $p'$ dominates $p''$, iff:
\begin{equation}
z(p',t)\geq z(p'',t)\quad \forall t\geq 0.
\end{equation}
\end{definition}
\begin{definition}[\textbf{Path ranking invariance}]
\textit{A time-dependent graph $G$ is path ranking invariant}, if the path dominance rule holds true  for any pair of paths $p'$ and $p''$ of $G$.
\end{definition}

The importance of path ranking invariance property is related to its relationship with the optimality conditions of some classical time-dependent routing problems. Given  a time-dependent graph $G:=(V,A,\tau)$, a set of required vertices $V_R\subseteq V$ and a set of required arcs $A_R\subseteq A$, let us denote with $\mathcal{P}$  a set of paths starting from and ending to a given vertex $i_0$ of $V$ and passing through each required vertex $i\in V_{R}$ and each required arc $(i,j)\in A_R$ at least once. 
Given a starting time $t_0$,  we focus on the \textit{Time-Dependent General Routing Problem} aiming to determine the least duration path on $\mathcal{P}$, that is
 \begin{equation}\label{opt_ref}
\min\limits_{p\in\mathcal{P}}z(p,t_0).
\end{equation}
For notational convenience, we model the \textit{Time-Dependent Travelling \textcolor{black}{S}alesman Problem} as a special case of the compact formulation (\ref{opt_ref}), where it is required that $G$ is complete, $A_R:=\emptyset$ and $V_R:=V$.
Algorithms developed for the time-invariant counterpart of such routing problems are not able to consider time-varying travel times without essential structural modifications. Nevertheless, we observe that the absence of time  constraints implies that time-varying travel times have an impact  on the ranking of solutions of the routing problem (\ref{opt_ref}), but they do not pose any difficulty for feasibility check of solutions. 
In particular, one can assert that there always exists a time-invariant (dummy) cost function $d:A\rightarrow  \mathbb{R}^+$ such that a least duration path of (\ref{opt_ref}) is also a least cost path of the time-invariant instance of  
\begin{equation}\label{opt_ref1}
\min\limits_{p\in\mathcal{P}}\sum\limits_{(i,j)\in p} d(i,j),
\end{equation}
where the notation $(i,j)\in p$ means that the arc $(i,j)\in A$ is traversed by the path $p$. 
\begin{definition} \label{def2} 
A time-invariant cost function $d:A\rightarrow \mathbb{R}^+$ is valid for $G$, if the least duration path of the time-dependent  instance $(\mathcal{P},\tau)$ of (\ref{opt_ref}) is also a least cost path of the time-invariant  instance $(\mathcal{P},d)$ of (\ref{opt_ref1}), for any set $\mathcal{P}$ of paths defined on $G$.
\end{definition}
If we are given a cost function \textit{valid} for a time-dependent graph $G$, then we can solve various time-dependent routing problems defined on $G$, by exploiting algorithms developed for their time-invariant counterpart. For example we can determine the least duration \textcolor{black}{H}amiltonian circuit of $G$, by solving a simpler (yet NP-Hard) TSP where cost of arc $(i,j)\in A$ is equal to $d(i,j)$. Similarly, the least duration solution of an instance of TDGRP defined on $G$ can be determined by solving a time-invariant GRP. Nevertheless, the main issue of such approach is: \textit{how to certificate that a cost function $d:A\rightarrow \mathbb{R}^+$ is valid for a given time-dependent graph}. Since the travel time functions are constant in the long run, the answer is quite straightforward for path ranking invariant graphs. 
\begin{remark}
 If the time-dependent graph $G$ is path ranking invariant, then it is valid for $G$ any cost function $d:A\rightarrow \mathbb{R}^+$, where each value $d(i,j)$ is proportional to the traversal time of arc $(i,j)\in A$ when the vehicle leaves vertex $i$ at time instant $T$, that is:
 $$\arg\min\limits_{p\in\mathcal{P}} \sum\limits_{(i,j)\in\mathcal{P}}\tau_{ij}(T)=\arg\min\limits_{p\in\mathcal{P}}z(p).$$
\end{remark}

This explain our previous assertion about the relationship between the path ranking invariance property and the optimality conditions of the class of time-dependent routing problems (\ref{opt_ref}).
In the following sections we demonstrate that a class of path ranking invariant graphs is computable. \textcolor{black}{In computability theory, a set of symbols is computable if there exists  an algorithm that correctly decides whether a symbol  belongs to such set.} In particular, we aim to devise an algorithm that takes as input a time-dependent graph $G$  and either decides correctly that $G$ belongs to \textcolor{black}{a class of path ranking invariant graphs} or returns a constant cost function suitable to determine a lower bound on the optimal solution of  (\ref{opt_ref}).

\section{A family of travel cost functions} \label{sec_2}
In this section, we propose a parameterized family of travel cost functions and investigate its relationship with path ranking invariance property. The parameter of such a cost model is a step cost function. In particular, we are interested in time-dependent graphs, for which there exists \textit{a step function generating a constant travel cost for each arc}. In subsection \ref{sec_2_1}, we prove that \textit{the existence of such step function is a sufficient condition for stating that the graph is path ranking invariant}. In subsection \ref{sec_2_2}, we discuss some properties of the proposed travel cost model. Such properties are exploited in Section \ref{sec_3}, in order to devise an algorithm that, given as input a time-dependent graph, correctly decides if there exists or not a step function \textit{generating}  a constant travel cost for each arc.

\subsection {The model}\label{sec_2_1}
Let $b:\mathcal{T}_b\rightarrow \mathbb{R}^+$ denote a  step function such that $\mathcal{T}_{b}$ is an ordered set of time instants and 
$$b(t):=b_h\quad t\in[t_h,t_{h+1}],$$
with \textcolor{black}{$b_h>0$},  $t_h\in\mathcal{T}_{b}$ and $h=0,\dots,|\mathcal{T}_{b}|-1$. For notational convenience, we also use $\textbf{b}=[(t_0,b_0),\dots,(t_{|\mathcal{T}_{b}|-1},b_{|\mathcal{T}_{b}|-1})]$ to designate the step function $b(t)$. The step-function $\mathbf{b}$ is the input parameter of a family of travel cost functions $c_{ij}(t,\textbf{b})$ defined as follows. The set  of  breakpoints  $\mathcal{T}_b$ represents  a partition of the planning horizon in $|\mathcal{T}_{b}|$ time intervals, whilst each value $b_h$ models the cost associated to one time unit spent traveling during the $h$-th time interval, with $h=0,\dots,|\mathcal{T}_{b}|-1$. \textcolor{black}{Since each unit cost  is strictly positive, it  never pays to wait.}
Given a time instant $t\geq 0$, the value $c_{ij}(t,\textbf{b})$ represents the overall travel cost associated to  arc $(i,j)\in A$,  when the vehicle leaves the vertex $i$ at time $t\in [t_p,t_{p+1}]$ and arrives at vertex $j$ at time $(t+\tau_{ij}(t))\in[t_q,t_{q+1}]$, that is:
\begin{equation}\label{c_def}
c_{ij}(t,\textbf{b}):=(t_{p+1}- t)b_{p}+\sum\limits_{h:=p+1}^{q-1}(t_{h+1}-t_h)b_{h}+(t+\tau_{ij}(t)-t_q)b_{q},
\end{equation}
with $(t_p,b_p), (t_q,b_q)\in \mathbf{b}$ and $ p,q=0,\dots,|\mathcal{T}_{b}|-1.$ In Figure \ref{Figure_1} it is reported a numerical example consisting of a given  travel time function $\tau_{ij}(t)$ and an arbitrarily chosen step function $\textbf{b}$, where the sets of breakpoints are $\mathcal{T}_\tau(i,j):=\{0.0,4.0,5.0\}$ and $\mathcal{T}_b:=\{0.0,1.0,2.0,3.0,4.0,5.0\}$.  
As shown in  Figure \ref{Figure_1}, the selected step function $\textbf{b}$ generates a travel cost function $c_{ij}(t,\textbf{b})$ which has  a constant value equal to 3. \\
\indent Let us denote with $\mathcal{L}_C$ the set of time-dependent graphs such that there exists a step function $\mathbf{b}^*$ \textit{generating  a constant travel cost }for  each arc $(i,j)\in A$.


\textcolor{black}{
\begin{prop} \label{prop_rank}
If a time-dependent graph  $G$ belongs to  $\mathcal{L}_C$, then it is path ranking invariant.\\
\end{prop}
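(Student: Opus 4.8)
\section*{Proof plan}

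The plan is to reduce the time-dependent comparison of any two paths to the comparison of two \emph{time-independent} numbers, using the antiderivative of the step cost rate as an order-preserving change of variable. Let $\mathbf{b}^*$ be the step function that generates a constant cost on every arc, and introduce its cumulative function $\beta(s):=\int_0^s b^*(u)\,du$. Then $\beta$ is continuous, piecewise linear, and --- because every unit cost is strictly positive --- strictly increasing, hence a bijection of $[0,+\infty)$ onto its range. The first thing I would verify is that the definition (\ref{c_def}) is nothing but the increment of $\beta$ over the traversal interval, namely $c_{ij}(t,\mathbf{b}^*)=\beta(t+\tau_{ij}(t))-\beta(t)$: the two boundary terms account for the partially traversed intervals at departure and at arrival, while the middle summation accounts for the fully traversed ones.

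Next I would use this to compute the total cost accumulated along a path $p=(i_0,\dots,i_k)$ by summing the per-arc contributions. Since the $m$-th arc is left at time $t+z(p_{m-1},t)$ and entered at $t+z(p_m,t)$ according to the recursion (\ref{z_tau}), its cost equals $\beta(t+z(p_m,t))-\beta(t+z(p_{m-1},t))$, and the sum over $m=1,\dots,k$ telescopes to $\beta(t+z(p,t))-\beta(t)$. Because $G\in\mathcal{L}_C$, each arc cost $c_{ij}(\cdot,\mathbf{b}^*)$ equals a constant $c_{ij}$ that does not depend on the start time, so the very same total cost also equals $\sum_{(i,j)\in p}c_{ij}$. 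Equating the two expressions yields, for every path $p$ and every $t\ge 0$, the identity $\beta(t+z(p,t))=\beta(t)+\sum_{(i,j)\in p}c_{ij}$.

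Finally, for an arbitrary pair of paths $p'$ and $p''$ I would subtract the two identities to obtain $\beta(t+z(p',t))-\beta(t+z(p'',t))=\sum_{(i,j)\in p'}c_{ij}-\sum_{(i,j)\in p''}c_{ij}$, whose right-hand side is a constant independent of $t$. Since $\beta$ is strictly increasing, $\beta(t+z(p',t))\ge\beta(t+z(p'',t))$ holds if and only if $z(p',t)\ge z(p'',t)$; and because the left-hand difference keeps a constant sign, exactly one of the alternatives $z(p',t)\ge z(p'',t)$ for all $t$, or $z(p'',t)\ge z(p',t)$ for all $t$, must hold. This is precisely the path dominance rule, and since the pair was arbitrary, $G$ is path ranking invariant.

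The step I expect to require the most care is the identification of (\ref{c_def}) with the $\beta$-increment, i.e. checking that the boundary and interior terms add up correctly in all configurations, including the degenerate case $p=q$ (departure and arrival in the same subinterval) and arrivals or departures landing on breakpoints of $\mathcal{T}_b$; once this is settled the telescoping is automatic. The other point worth stressing is that the strict positivity of the $b_h$ is used in an essential way: it is exactly what makes $\beta$ injective and therefore order-preserving, which is the mechanism that freezes the relative ranking of the two paths as $t$ varies.
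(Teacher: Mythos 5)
Your proof is correct and follows essentially the same route as the paper's: both rewrite the constant arc cost as the increment $\int_t^{t+\tau_{ij}(t)}b^*(\mu)\,d\mu$ of the cumulative cost, telescope this along a path to tie the (time-independent) total cost to the arrival time, and use strict positivity of $b^*$ to make that correspondence order-preserving. Your version merely makes explicit (via $\beta$ and its strict monotonicity) what the paper leaves implicit, and is, if anything, slightly more careful about the telescoping endpoints.
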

\begin{proof}
Since $\mathbf{b}^*$ is a step function, (\ref{c_def}) can be rewritten as follows:
$$\underline{c}_{ij}=\int_t^{t+\tau_{ij}(t)}b^*(\mu)d\mu.$$
We observe that for each path $p_k$ defined on $G$ it is associated a travel cost  computed as follows:
\begin{equation}\label{z_c_def}
\sum\limits_{(i,j)\in p_k}\underline{c}_{ij}=\int_{t}^{z_\tau(p_k,t)}b^*(\mu).\nonumber
\end{equation}
This implies that for any pair of paths $p$ and $p^{\prime}$ defined on $G$, it results that:
$$\sum\limits_{(i,j)\in p}\underline{c}_{ij}\leq \sum\limits_{(i,j)\in p^{\prime}}\underline{c}_{ij}\Leftrightarrow z_\tau(p,t)\leq z_\tau(p^{'},t),$$
for $t\geq 0$. Since paths $p$ and $p^{\prime}$ have been arbitrarily chosen, the thesis is proven.
\end{proof} }

\begin{figure}[H] 

 \includegraphics[width=150mm ]{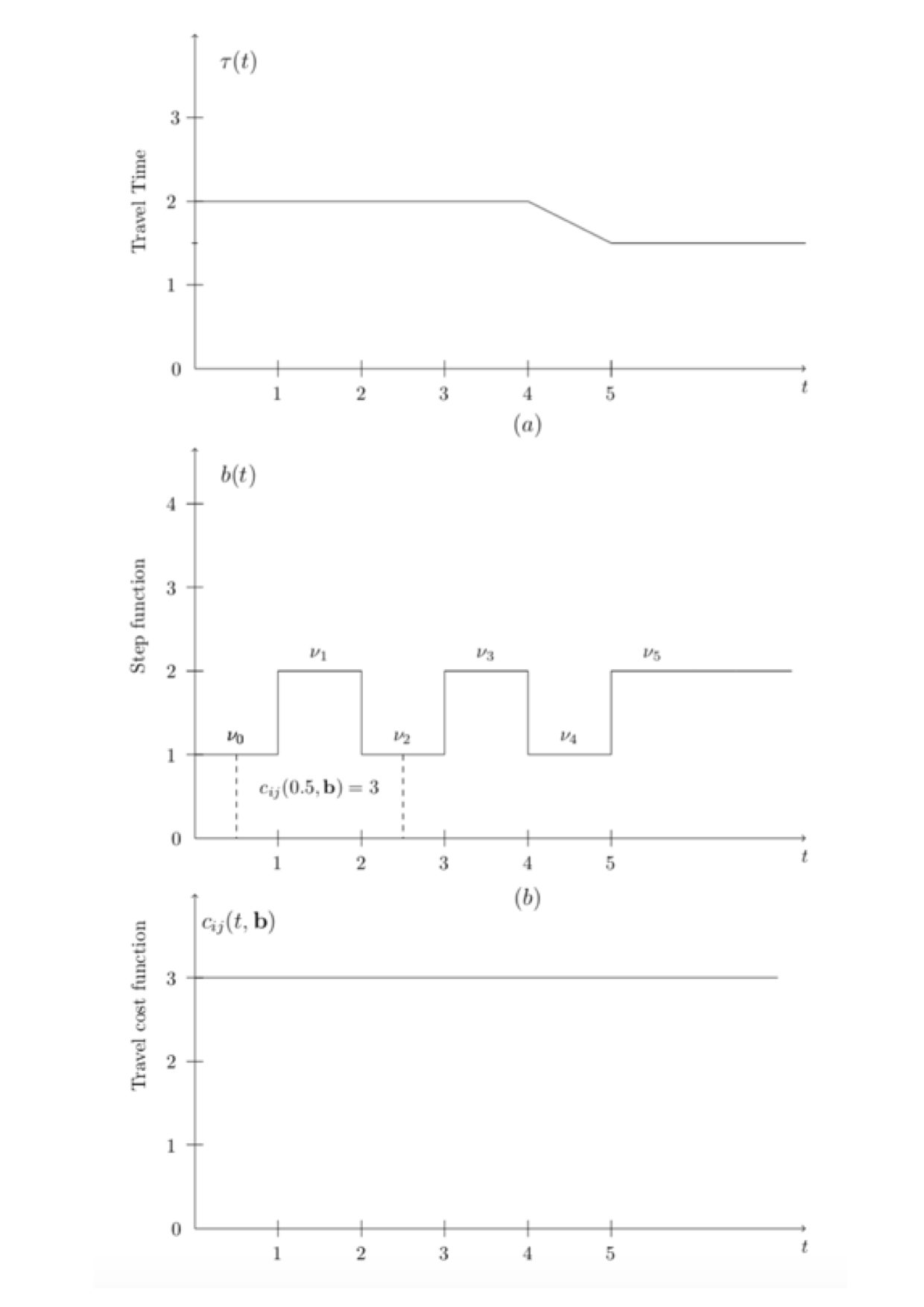}
\caption{A continuous piecewise linear arc travel time function $\tau_{ij}$, the associated constant step function $b(t)$.} \label{Figure_1}

\end{figure}

\subsection{Properties of travel cost function\textcolor{black}{s}}\label{sec_2_2}
We observe that, given a time-dependent graph $G=(V,A,\tau)$ and a step function $\mathbf{b}$, each output travel cost function $c_{ij}(t,\mathbf{b})$ is continuous piecewise linear function, with a number of breakpoints  not greater than $\mathcal{T}_{\tau}(i,j)+2\times |\mathcal{T}_b|$, with $(i,j)\in A$. In particular, we denote with $\mathcal{T}_{c}(i,j,\mathcal{T}_b)$ the corresponding set of breakpoints. From Proposition \ref{prop_rank}, it descends that we are interested in determining if there exists a step function $\mathbf{b}^*$ satisfying (\ref{constant_c}):
\begin{equation}\label{constant_c}
\mathcal{T}_c(i,j,\mathcal{T}_{b^*})=\emptyset.
\end{equation}
For this reason, we now provide both sufficient conditions and necessary conditions for asserting that a time instant $t$ is a breakpoint of a travel cost function $c_{ij}(t,\textbf{b})$, with $(i,j)\in A$ and $t\geq 0$.\\
\indent Let $\Gamma_{ij}(t)$ denote the arrival time at node $j$ when the vehicle starts to traverse the arc $(i,j)\in A$ at time instant $t$, i.e. $\Gamma_{ij}(t)=t+\tau_{ij}(t)$. First from (\ref{c_def}) we have that if a time instant $t$ is a breakpoint of $c_{ij}(t,\textbf{b})$ then at least one of the following necessary conditions hold true: $t$ is a breakpoint of $b(t)$; $\Gamma_{ij}(t)$  is a breakpoint of $b(t)$; $t$ is a breakpoint of $\tau_{ij}(t)$. More formally:
\begin{equation}\label{t_c}
t\in\mathcal{T}_{c}(i,j,\mathcal{T}_b)\quad\Rightarrow \quad t\in \mathcal{T}_b \vee t\in\mathcal{T}_{\tau} (i,j ) \vee\Gamma_{ij}(t)\in \mathcal{T}_b,
\end{equation}

As far as  sufficient conditions is concerned,  a time instant $t$ is a breakpoint of $c_{ij}(t,\textbf{b})$ if the following three conditions hold:  $t $ is  a breakpoint of the step function $\mathbf{b}$; $t$ is not a breakpoint of the travel time function $\tau_{ij}(t)$;  the arrival time $\Gamma_{ij}(t)$ is not a breakpoint of $\mathbf{b}$. More formally:
\begin{equation}\label{suff_con}
t\in\mathcal{T}_b\wedge t\notin \mathcal{T}_\tau(i,j) \wedge \Gamma_{ij}(t)\notin\mathcal{T}_b\Rightarrow t\in\mathcal{T}_c(i,j,\mathcal{T}_b).
\end{equation}
The implication (\ref{suff_con}) has been  demonstrated in the Theorem \ref{Theo_bp_1}.
\textcolor{black}{
\begin{theorem}\label{Theo_bp_1}
Let us suppose that we are given an arc $(i,j)\in A$, a step function $b(t)$ and one of its breakpoint $t_p$, that is $t_p\in\mathcal{T}_b$ with $p=1,\dots,|\mathcal{T}_b|$. If  both the conditions (a) and (b) hold true, then  a breakpoint of $c_{ij}(t,\textbf{b})$ also occurs at time instant $t_p$.
 \begin{enumerate}[(a)]
\item The time instant $t_p$ is not a breakpoint of $\tau_{ij}(t)$, i.e. $t_p\notin\mathcal{T}_\tau(i,j)$.
\item  The arrival time  $\Gamma_{ij}(t_p)$ is not  a breakpoint of $b(t)$, that is $\Gamma_{ij}(t_p)\notin\mathcal{T}_b$
\end{enumerate}
\end{theorem}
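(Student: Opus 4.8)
The plan is to establish that $t_p$ is a genuine kink of the continuous piecewise-linear function $c_{ij}(\cdot,\mathbf b)$ by computing its left and right derivatives at $t_p$ and showing that they disagree. The cleanest device is the integral representation already exploited in the proof of Proposition \ref{prop_rank}: since $b$ is a step function, (\ref{c_def}) can be written as $c_{ij}(t,\mathbf b)=\int_{t}^{\Gamma_{ij}(t)}b(\mu)\,d\mu$, where $\Gamma_{ij}(t)=t+\tau_{ij}(t)$ is the arrival-time function. Wherever $\tau_{ij}$ is smooth and $b$ does not jump at the integration limits, differentiating under the integral sign (Leibniz rule) yields
\[
\frac{d}{dt}c_{ij}(t,\mathbf b)=b\big(\Gamma_{ij}(t)\big)\,\big(1+\tau_{ij}'(t)\big)-b(t).
\]

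First I would record the regularity that the two hypotheses buy near $t=t_p$. Condition (a), $t_p\notin\mathcal T_\tau(i,j)$, means $\tau_{ij}$ is linear in a neighbourhood of $t_p$, so its one-sided slopes coincide, $\tau_{ij}'(t_p^-)=\tau_{ij}'(t_p^+)$, and $\Gamma_{ij}$ is smooth at $t_p$. Condition (b), $\Gamma_{ij}(t_p)\notin\mathcal T_b$, places the arrival time strictly inside some cost interval $(t_q,t_{q+1})$; since $\Gamma_{ij}$ is continuous (FIFO even makes it nondecreasing), for $t$ close enough to $t_p$ the arrival time stays in $(t_q,t_{q+1})$, so $b(\Gamma_{ij}(t))\equiv b_q$ is constant on both sides of $t_p$. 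The only quantity in the displayed derivative that behaves asymmetrically across $t_p$ is therefore $b(t)$ itself: being a breakpoint of the step function, $t_p$ is a point where the step value jumps, so $b(t_p^-)=b_{p-1}\neq b_p=b(t_p^+)$.

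Substituting these facts, the right derivative equals $b_q(1+\tau_{ij}'(t_p))-b_p$ while the left derivative equals $b_q(1+\tau_{ij}'(t_p))-b_{p-1}$; their difference is $b_{p-1}-b_p\neq 0$. Hence the slope of $c_{ij}(\cdot,\mathbf b)$ changes at $t_p$, i.e. $t_p\in\mathcal T_c(i,j,\mathcal T_b)$, which is the claim.

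The step I expect to require the most care is justifying the one-sided Leibniz differentiation at a point where $b$ is discontinuous: one must ensure the jump of $b$ enters only through the \emph{lower} limit (the departure term $-b(t)$) and never through the upper limit, which is exactly what condition (b) rules out, while condition (a) prevents an extra slope contribution from a kink of $\tau_{ij}$ that could mask or cancel the jump $b_{p-1}-b_p$. A fully rigorous alternative, avoiding differentiation under the integral altogether, is to expand (\ref{c_def}) explicitly for departures in $[t_{p-1},t_p]$ and in $[t_p,t_{p+1}]$, keeping the arrival index $q$ fixed thanks to (b), and read off the two affine pieces of $c_{ij}$ directly; the coefficient of $t$ then jumps precisely by $b_{p-1}-b_p$, giving the same conclusion.
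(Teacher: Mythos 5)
Your proposal is correct and follows essentially the same route as the paper: the paper's proof writes out (\ref{c_def}) at $t_p-\Delta$, $t_p$ and $t_p+\Delta$ and computes the two difference quotients directly, finding that their difference equals $b_p-b_{p-1}\neq 0$, which is exactly the identity you obtain via the integral representation and the one-sided Leibniz rule (and your ``fully rigorous alternative'' of expanding (\ref{c_def}) on the two adjacent intervals is literally the paper's argument). Your observations that (a) keeps the slope of $\tau_{ij}$ from contributing asymmetrically and that (b) freezes the value of $b$ at the arrival time on both sides of $t_p$ are the same two facts the paper uses to isolate the jump $b_{p-1}\neq b_p$.
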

 \begin{proof}
 If we write Equation (\ref{c_def}) for the time instant  $t_p$,  we obtain:
 \begin{equation}\label{c_def_2}
c_{ij}(t_p,\mathbf{b})=(t_{p+1}-t_p)b_{p}+\sum_{\ell=p+1}^{q-1} (t_{\ell+1}-t_{\ell})b_{\ell}+(t_p+\tau_{ij}(t_p)-t_{q})b_{q}.
\end{equation}
The thesis is proved if we demonstrate that there exists $\Delta>0$ , such that 
$$\frac{c_{ij}(t_p,\mathbf{b})-c_{ij}(t_p-\Delta,\mathbf{b})}{\Delta}\neq\frac{c_{ij}(t_p+\Delta,\mathbf{b})-c_{ij}(t_p,\mathbf{b})}{\Delta}.$$
 From the hypothesis it results that there exists a $\Delta>0$ such that the following conditions hold true. (1) The travel time function $\tau_{ij}(t)$ does not change its slope in the time interval $[t_p-\Delta,t_p+\Delta]$. (2) The  step function $b(t)$ does not change its value during the time intervals  $[(t_p-\Delta),t_p]$, $[t_p,(t_{p}+\Delta)]$ and $[\Gamma_{ij}(t_p-\Delta),\Gamma_{ij}(t_p+\Delta)]$. This implies that:
$$[(t_p-\Delta),t_p]\subseteq[t_{p-1},t_p]\wedge[t_p,(t_p+\Delta)]\subseteq[t_{p},t_{p+1}]$$
and
$$[\Gamma_{ij}(t_p-\Delta),\Gamma_{ij}(t_p+\Delta)]\subseteq[t_{q},t_{q+1}], $$
 with  $q>p$.
Let us  write Equation (\ref{c_def}) for the time instant $(t_p-\Delta)$:
\begin{equation}\label{c_def_1}
c_{ij}(t_p-\Delta,\mathbf{b})=(t_{p}-t_p+\Delta)b_{p-1}+\sum_{\ell=p}^{q-1} (t_{\ell+1}-t_{\ell})b_{\ell}+(t_p-\Delta+\tau_{ij}(t_p-\Delta)-t_{q})b_{q}.
\end{equation}
By subtracting (\ref{c_def_1}) from (\ref{c_def_2}), we obtain: 
\begin{equation}\label{quad4}
\frac{c_{ij}(t_p,\mathbf{b})-c_{ij}(t_p-\Delta,\mathbf{b})}{\Delta}=-b_{p-1}+ b_{q}+\frac{ \tau_{ij}(t_p)-\tau_{ij}(t_p-\Delta)}{\Delta}b_{q}.
\end{equation}
Similarly let us subtract (\ref{c_def_2}) from (\ref{c_def}) rewritten for time instant  $(t_p+\Delta)$
\begin{equation}\label{quad4.1}
\frac{c_{ij}(t_p+\Delta,\mathbf{b})-c_{ij}(t_p,\mathbf{b})}{\Delta}=-b_{p}+ b_{q}+\frac{ \tau_{ij}(t_p+\Delta)-\tau_{ij}(t_p)}{\Delta}b_{q}\nonumber
\end{equation}
Since $t_p$ is not a breakpoint of $\tau_{ij}(t)$, then we have that:
$$\frac{c_{ij}(t_p,\mathbf{b})-c_{ij}(t_p-\Delta,\mathbf{b})}{\Delta}-\frac{c_{ij}(t_p+\Delta,\mathbf{b})-c_{ij}(t_p,\mathbf{b})}{\Delta}=-b_{p-1}+b_p.$$
As $t_p$ is a breakpoint of the given step function we have that $b_{p-1}\neq b_p$. The thesis is proved.\\
\end{proof}}

\section{The \textcolor{black}{Constant Traversal Cost Problem}} \label{sec_3} 
As stated by Proposition \ref{prop_rank}, the membership of a graph $G$ in the set $\mathcal{L}_C$ implies its path ranking invariance. To ease the discussion, from now on, we refer to the decision problem corresponding to the set $\mathcal{L}_C$ as the \textit{\textcolor{black}{Constant Traversal Cost Problem}}.
\begin{problem}
  \problemtitle{\textbf{\textcolor{black}{Constant Traversal Cost Problem}}}
  \probleminput{A time-dependent graph $G$.}
  \problemquestion{Is there a step  function $\mathbf{b}^*$  such that the corresponding travel cost $c_{ij}(t,\textbf{b}^*)$ associated to each arc $(i,j)\in A$ is constant?}
\end{problem}
In this section, we demonstrate that the \textcolor{black}{CTCP} is decidable. This goal is gained by demonstrating that there exists a \textit{computable function} $\mathbf{1}_C(G)$ which is characteristic for the set $\mathcal{L}_C$ of time-dependent graphs,  that is $\mathbf{1}_C(G):=1$ if $G\in\mathcal{L}_C$ and  $\mathbf{1}_C(G):=0$ if $G\notin\mathcal{L}_C$.

\subsection{Definition of the characteristic function}\label{char_func}
We denote with $\gamma(\mathcal{T}_b,\textbf{b})$ the \textit{maximum travel cost range} of a graph $G$ with respect to a step cost function $\textbf{b}$, defined as follows:
$$\gamma(\textbf{b}):=\max\limits_{(i,j)\in A}(\max\limits_{t\in[0,T]}c_{ij}(t,\textbf{b})-\min\limits_{t\in[0,T]}c_{ij}(t,\textbf{b})).$$ 
Let  denote with $\gamma^*$ the \textit{min-max travel cost range of a graph} $G$, that is:
\textcolor{black}{\begin{equation}\label{gamma_prob}
\gamma^*:=\min\limits_{\textbf{b}} (\gamma(\textbf{b})| b(t)>\rho\quad \forall t\geq0).
 \end{equation}}
By observing that a constant travel cost function has a range value zero, from  (\ref{gamma_prob}) it descends the following Proposition.
 \begin{prop}\label{omega}
The boolean function (\ref{1_C1}) is a total characteristic function of \textcolor{black}{CTCP}.
\begin{equation}\label{1_C1}
\mathbf{1}_{C}(G):=\begin{cases} 1& if\quad \gamma^*=0 \\0 \quad& otherwise  \end{cases}. 
\end{equation}
\end{prop}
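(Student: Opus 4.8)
The plan is to establish two things: that $\mathbf{1}_C$ returns the correct membership value, i.e.\ $\gamma^*=0 \iff G\in\mathcal{L}_C$, and that it is total, i.e.\ $\gamma^*$ is a well-defined real number for every input graph $G$. Everything hinges on the elementary observation that $\gamma(\mathbf{b})$ is a maximum, taken over the arcs, of quantities of the form $\max_{t}c_{ij}(t,\mathbf{b})-\min_{t}c_{ij}(t,\mathbf{b})$, each of which is the range of a real-valued function and is therefore non-negative, vanishing exactly when $c_{ij}(\cdot,\mathbf{b})$ is constant on $[0,T]$. Consequently $\gamma(\mathbf{b})\ge 0$ for every feasible step function $\mathbf{b}$, and $\gamma(\mathbf{b})=0$ holds if and only if the chosen $\mathbf{b}$ makes the travel cost constant on every arc simultaneously.

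First I would settle the correctness equivalence. If $G\in\mathcal{L}_C$, then by the very definition of $\mathcal{L}_C$ there is a feasible step function $\mathbf{b}^*$ (with $b^*(t)>\rho$) generating a constant travel cost on each arc; by the observation above $\gamma(\mathbf{b}^*)=0$, and since $\gamma^*$ is the minimum of a non-negative quantity this forces $\gamma^*=0$. Conversely, if $\gamma^*=0$ then the minimum in (\ref{gamma_prob}) is attained by some feasible $\mathbf{b}^*$ with $\gamma(\mathbf{b}^*)=0$, which by the same observation means every $c_{ij}(\cdot,\mathbf{b}^*)$ is constant, i.e.\ $G\in\mathcal{L}_C$. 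Contrapositively, $G\notin\mathcal{L}_C$ gives $\gamma(\mathbf{b})>0$ for every feasible $\mathbf{b}$ and hence $\gamma^*>0$. Reading this against definition (\ref{1_C1}) yields $\mathbf{1}_C(G)=1\iff G\in\mathcal{L}_C$.

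It then remains to argue totality, namely that $\gamma^*$ exists for every $G$. The feasible region is non-empty (any constant step function with value exceeding $\rho$ qualifies) and $\gamma(\mathbf{b})$ is bounded below by $0$, so an infimum certainly exists; the delicate point is that the defining expression uses a minimum, so one must know the infimum is attained and that it is strictly positive whenever $G\notin\mathcal{L}_C$. I expect this to be the main obstacle, because a priori $\mathbf{b}$ ranges over an infinite-dimensional family of step functions. The way around it is to observe that, for a fixed breakpoint set $\mathcal{T}_b$, $c_{ij}(t,\mathbf{b})$ depends piecewise-linearly on the values $b_h$, and that the candidate breakpoint locations that matter can be confined to a finite set induced by the (finitely many) breakpoints of the functions $\tau_{ij}$ and their images under $\Gamma_{ij}$, as made precise by the necessary and sufficient conditions (\ref{t_c})--(\ref{suff_con}). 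This reduces the search to a finite-dimensional closed problem on which the minimum is attained, so that $\gamma^*\in[0,\infty)$ is well defined and the contrapositive above is legitimate.

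Combining the correctness equivalence with totality shows that $\mathbf{1}_C$ assigns to every graph $G$ the value $1$ exactly when $G$ is a yes-instance of CTCP and $0$ otherwise, i.e.\ it is a total characteristic function of CTCP, which proves the proposition. I note that this argument only certifies that $\gamma^*$ exists and characterises $\mathcal{L}_C$; the separate claim that $\gamma^*$ is effectively \emph{computable} (via the announced linear program) is what upgrades this to decidability and is addressed in the sequel.
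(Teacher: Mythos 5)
Your core argument --- that the range of each $c_{ij}(\cdot,\mathbf{b})$ is non-negative and vanishes exactly when the cost function is constant, so $\gamma^*=0$ iff some feasible $\mathbf{b}$ makes every arc's travel cost constant iff $G\in\mathcal{L}_C$ --- is precisely the one-line observation the paper itself offers in place of a proof, so you are taking essentially the same approach. The extra care you devote to attainment of the minimum in (\ref{gamma_prob}), and to the mismatch between the constraint $b(t)>\rho$ there and the bare positivity $b_h>0$ in the definition of $\mathcal{L}_C$ (which your phrase ``by the very definition of $\mathcal{L}_C$'' glosses over, but which is closed by the same rescaling of $\mathbf{b}$ that the paper uses later in Proposition~\ref{ub}), goes beyond what the paper records and is welcome rather than a divergence.
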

\indent In order to assess the decidability of the \textcolor{black}{CTCP}, we have to devise an algorithm that  decides correctly whether the optimal value $\gamma^*$  has  value zero.  
We observe that the optimization problem (\ref{gamma_prob}) consists of two interdependent tasks: (a) determining the breakpoint set $\mathcal{T}_b$; (b) determining the values $(b_0,\dots,b_{|\mathcal{T}_b|-1})$ so that to minimize the corresponding maximum travel cost range.  As far as the first task is concerned we limit the corresponding search space to  
a finite and discrete set $\mathbf{\Omega}$, such that its elements are \textit{potential}  breakpoints of $\textbf{b}^*$, that is:
\begin{equation}\label{T__star}
t\in \mathcal{T}_{b^*}\Rightarrow t\in\mathbf{\Omega} .
\end{equation}
With the aim of determining a set $\mathbf{\Omega} $ satisfying these conditions, we associate to each arc $(i,j)\in A$ the set $\mathbf{\Omega}_{ij}$. In the definition of each $\mathbf{\Omega}_{ij}$ a key role is played by  ordered sets of time instants, termed \textit{time sequences generated by a given time instant on arc $(i,j)\in A$}.
\begin{definition}  \label{def_0} An ordered set of time instants $\mathbold{\omega}:=\{\omega_{1},\dots,\omega_{L}\}$, is a \textit{time sequence generated by time instant $\omega_{1}$ on the arc} $(i,j)\in A$ if the following conditions hold true:
$$\Gamma^{-1}_{ij}(\omega_1)<0,$$
$$\omega_\ell:=\Gamma_{ij}(\omega_{\ell-1})\quad \ell=2,\dots,L$$
$$\Gamma_{ij}(\omega_L)>T,$$
with  $0\leq \omega_{1}\leq \omega_{2}\leq \dots\leq \omega_{L}\leq T$
\end{definition}

We observe that there exists an infinite number of \textit{time sequences} that can be  generated on an arc $(i,j)\in A$: one for each time instant $t$ satisfying the first condition of Definition \ref{def_0}. For example for the travel time function in Figure 1 there exists one time sequence for each time instant belonging to the interval $[0,2[$. On the other hand, due to the FIFO property, the arrival time function $\Gamma_{ij}(t)$ is strictly increasing. This implies that each time instant $t\in [0,T]$ belongs to exactly one \textit{time sequence generated on the arc} $(i,j)\in A$. We reference such time sequence with $\mathbold{\omega}_{ij}(t):=\{\omega_{ij1}(t),\dots,\omega_{ijL}(t)\}$. In the numerical example of  Figure 1, both of the time instants $2.0$ and $4.0$ belong to the (unique) time sequence generated by $0.0$ on arc $(i,j)$, i.e.  $\mathbold{\omega}_{ij}(2.0):=\mathbold{\omega}_{ij}(4.0):=\{0.0,2.0,4.0\}$.
Given a time-dependent graph $G=(V,A,\tau)$, the set $\mathbf{\Omega}_{ij}$ is defined as the union set of the time sequences  $\mathbold{\omega}_{ij}(t)$, such that $t$ is a breakpoint of $\tau$, that is:
\begin{equation}\label{omegaij}
\mathbf{\Omega}_{ij}:=\bigcup\limits_{t\in\mathcal{T}}\mathbold{\omega}_{ij}(t),
\end{equation}

with $\mathcal{T}:=\bigcup\limits_{(i,j)\in A}\mathcal{T}_\tau(i,j)$ and $(i,j)\in A$. It is worth noting that since $\mathcal{T}$ is a subset of $\mathbf{\Omega}$, it is guaranteed that $\mathbf{\Omega}$ is not empty. 
\begin{theorem} \label{T1_C}
The set  $\mathbf{\Omega}$ defined in (\ref{Omega}) is a finite and discrete set of \textit{potential breakpoints} of $b^*(t)$.
\begin{equation}\label{Omega}
\mathbf{\Omega}:=\bigcap\limits_{(i,j)\in A}\mathbf{\Omega}_{ij}.
\end{equation}
\end{theorem}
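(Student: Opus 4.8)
The statement packs two assertions: that $\mathbf{\Omega}$ is finite and discrete, and that it is a set of \emph{potential} breakpoints, i.e. that (\ref{T__star}) holds, $\mathcal{T}_{b^*}\subseteq\mathbf{\Omega}$, for the step function $b^*$ witnessing $G\in\mathcal{L}_C$. Finiteness I would treat first and quickly. The set $\mathcal{T}=\bigcup_{(i,j)\in A}\mathcal{T}_\tau(i,j)$ is finite because $A$ is finite and each $\tau_{ij}$ is piecewise linear with finitely many breakpoints. Each time sequence $\mathbold{\omega}_{ij}(t)=\{\omega_1,\dots,\omega_L\}$ is finite: travel times are strictly positive and continuous on the compact horizon, hence bounded below by some $\varepsilon_{ij}>0$, so $\Gamma_{ij}$ advances each $\omega_\ell$ by at least $\varepsilon_{ij}$ and the stopping condition $\Gamma_{ij}(\omega_L)>T$ is reached in at most $\lceil T/\varepsilon_{ij}\rceil$ steps. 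Thus each $\mathbf{\Omega}_{ij}=\bigcup_{t\in\mathcal{T}}\mathbold{\omega}_{ij}(t)$ is a finite union of finite sets, and $\mathbf{\Omega}=\bigcap_{(i,j)\in A}\mathbf{\Omega}_{ij}$ is a finite intersection of finite sets, hence finite and therefore discrete.

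The substance is the inclusion (\ref{T__star}). Since $b^*$ generates a constant cost on every arc, (\ref{constant_c}) gives $\mathcal{T}_c(i,j,\mathcal{T}_{b^*})=\emptyset$ for all $(i,j)\in A$. The engine of the argument is the contrapositive of Theorem \ref{Theo_bp_1}: as no $c_{ij}$ has any breakpoint, every $s\in\mathcal{T}_{b^*}$ must fail at least one of conditions (a)--(b) of that theorem, so for each arc $(i,j)$
$$s\in\mathcal{T}_\tau(i,j)\quad\vee\quad \Gamma_{ij}(s)\in\mathcal{T}_{b^*}.$$
I would read this as a forward recursion along the time sequence of $(i,j)$: a breakpoint of $b^*$ that is not a breakpoint of $\tau_{ij}$ propagates the breakpoint property to its successor $\Gamma_{ij}(s)$.

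Now fix an arc $(i,j)$ and some $t_p\in\mathcal{T}_{b^*}$, and let $\mathbold{\omega}_{ij}(t_p)=\{\omega_1,\dots,\omega_L\}$ be the unique sequence through $t_p$, with $t_p=\omega_m$. I claim some $\omega_\ell$, $\ell\geq m$, lies in $\mathcal{T}_\tau(i,j)$. If not, the recursion forces $\omega_m,\dots,\omega_L\in\mathcal{T}_{b^*}$ and then $\Gamma_{ij}(\omega_L)\in\mathcal{T}_{b^*}$ with $\Gamma_{ij}(\omega_L)>T$; but beyond $T$ the travel time equals the positive constant $\tau_{ij}(T)$, so $\tau_{ij}$ has no breakpoints there, and applying the recursion repeatedly produces the strictly increasing chain $\Gamma_{ij}(\omega_L),\ \Gamma_{ij}(\omega_L)+\tau_{ij}(T),\ \dots\subseteq\mathcal{T}_{b^*}$, contradicting the finiteness of $\mathcal{T}_{b^*}$. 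Hence $\mathbold{\omega}_{ij}(t_p)$ meets $\mathcal{T}_\tau(i,j)\subseteq\mathcal{T}$; since FIFO makes $\Gamma_{ij}$ strictly increasing, every instant belongs to exactly one time sequence, so $t_p$ shares its sequence with that breakpoint and therefore $t_p\in\mathbf{\Omega}_{ij}$. As $(i,j)$ was arbitrary, $t_p\in\bigcap_{(i,j)\in A}\mathbf{\Omega}_{ij}=\mathbf{\Omega}$, which establishes (\ref{T__star}).

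The delicate step, and the one I would write most carefully, is the termination of the forward chain at the horizon. The tempting but unjustified shortcut is to simply ``cut off'' $b^*$ at $T$; instead the clean contradiction is the infinite-breakpoint argument above, which uses only that $\tau$ is breakpoint-free beyond $T$ and that a step function has, by definition, finitely many breakpoints. The remaining bookkeeping — that $t_p$ and the discovered travel-time breakpoint lie in the same time sequence, and that intersecting over all arcs preserves the membership — is immediate from the strict monotonicity of $\Gamma_{ij}$.
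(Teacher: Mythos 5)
Your proof is correct and follows essentially the same route as the paper: both reduce the inclusion $\mathcal{T}_{b^*}\subseteq\mathbf{\Omega}$ to the contrapositive of Theorem \ref{Theo_bp_1} (via $\mathcal{T}_c(i,j,\mathbf{b}^*)=\emptyset$) propagated along the unique time sequence through the candidate breakpoint, plus the same FIFO-based finiteness count. The only difference is the direction of the induction --- the paper anchors a backward induction at the last element $\omega_L$, whose image $\Gamma_{ij}(\omega_L)>T$ cannot be a breakpoint of a step function that is constant in the long run, whereas you propagate forward and close with the finiteness of $\mathcal{T}_{b^*}$; both devices resolve the same boundary issue at the horizon.
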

\begin{proof}
We first observe that due to the FIFO property, given a time instant $t\in T$, the corresponding $\mathbold{\omega}_{ij}(t)$ is a finite and discrete set of time instants.
Therefore from  (\ref{omegaij}), (\ref{Omega}) and Definition \ref{def_0}, it results that $\mathbf{\Omega}$ consists of up to $|A|\times|\mathcal{T}|\times\frac{T}{\tau_{min}}$ where $\tau_{min}$ is the minimum traversal time  of $\tau$, that is $\tau_{min}:=\min\limits_{(i,j)\in A}(\tau_{ij}(t)|t\geq 0)$. This demonstrates that $\mathbf{\Omega}$ is a finite and discrete set. We now prove that (\ref{Omega}) defines a set of potential breakpoints for $\mathbf{b}^*$, that is:
\begin{equation}\label{cc}
t\in\mathcal{T}_{b^*}\Rightarrow t\in\mathbf{\Omega}_{ij},
\end{equation}
with $(i,j)\in A$.

We prove (\ref{cc}) by contradiction. Let us  suppose that there exists one arc $(i,j)\in A$ and a breakpoint $t^\prime$ of $b^*(t)$ such that $t'\notin\mathbf{\Omega}_{ij}$. We prove the thesis by demonstrating that:
\begin{equation}\label{thesis_1}
t\in\mathbold{\omega}_{ij}(t')\wedge t\in\mathcal{T}_{b^*}\Rightarrow t\in\mathcal{T}_c(i,j,\mathbf{b}^*),
\end{equation}
which contradicts the hypothesis that $t'$ is a breakpoint of a step function $\mathbf{b}^*$ generating a constant travel cost on arc $(i,j)$, that is $\mathcal{T}_c(i,j,\mathbf{b}^*)=\emptyset.$
We first observe that, since a time instant belongs to exactly one time sequence generated on arc $(i,j)$, then we have that $\mathbold{\omega}_{ij}(t')$ shares no time instant with $\mathbf{\Omega}_{ij}$. This implies that no time instant belonging to $\mathbold{\omega}_{ij}(t')$ is a breakpoint of the travel time function $\tau_{ij}(t)$, that is:
\begin{equation}\label{cond_aa}
t_\ell\in\mathbold{\omega}_{ij}(t')\Rightarrow t_\ell\notin\mathbf{\Omega}_{ij}\Rightarrow t_\ell\notin\mathcal{T}_\tau\subseteq\mathbf{\Omega}_{ij}.
\end{equation}
with $\ell=1,\dots,|\mathbold{\omega}_{ij}(t')|$.\\
From (\ref{suff_con}) and (\ref{cond_aa})  it results that (\ref{thesis_1}) can be demonstrate if we prove by induction on $\ell$ that:
\begin{equation}\label{thesis_ind}
t_\ell\in\mathbold{\omega}_{ij}(t')\wedge t_\ell\in\mathcal{T}_{b^*}\Rightarrow\Gamma_{ij}(t_\ell)\notin\mathcal{T}_{b^*}\wedge t_\ell\notin\mathcal{T}_\tau(i,j),
\end{equation}
with $\ell=1,\dots,|\mathbold{\omega}_{ij}(t')|$.\\
\noindent \underline{Case $\ell=|\mathbold{\omega}_{ij}(t')|$}. 
In this case $t_\ell$ denotes the last element of the ordered set  $\mathbold{\omega}_{ij}(t')$. From Definition \ref{def_0} it results that $\Gamma_{ij}(t_\ell)>T$.  Since any step function is constant in the long run, it results that 
the arrival time $\Gamma_{ij}(t_\ell)$ is not a breakpoint of  $\mathbold{b}^*$ for $\ell=|\mathbold{\omega}_{ij}(t')|$. From (\ref{cond_aa}) it results that (\ref{thesis_ind}) holds true for $\ell=|\mathbold{\omega}_{ij}(t')|$.\\
\underline{Case $\ell\leq|\mathbold{\omega}_{ij}(t')|-1$}. We suppose by induction that $t_{\ell+1}$ satisfies (\ref{thesis_ind}). Since  $t_{\ell}$ is the predecessor of $t_{\ell+1}$ in $\mathbold{\omega}_{ij}(t')$,  then from Definition \ref{def_0} 
we have that $t_{\ell+1}=\Gamma_{ij}(t_\ell)$. From (\ref{suff_con}), (\ref{constant_c})  and the induction hypothesis, it descends that $\Gamma_{ij}(t_\ell)$ cannot be a potential breakpoint of $b^*(t)$, i.e. $\Gamma_{ij}(t_\ell)\notin\mathcal{T}_{b^*}$. 
 From (\ref{cond_aa}) it results that (\ref{thesis_ind}) holds true for $\ell<|\mathbold{\omega}_{ij}(t')|$.\ \\

\end{proof}
\textcolor{black}{Given a constant value $\rho>0$, let us define the following restriction of (\ref{gamma_prob}):
\begin{equation}\label{gamma_prob1}
\overline{\gamma}:=\min\limits_{\textbf{b}}( \gamma(\textbf{b})|\mathcal{T}_b\subseteq\mathbf{\Omega}\wedge b(t)\geq\rho\quad \forall t\geq0).
 \end{equation}
}
\begin{prop}\label{ub}
Given a time-dependent graph $G$,  the  travel cost range $\overline{\gamma}$ is equal to  zero iff the $\gamma^*$ has value zero.
\end{prop}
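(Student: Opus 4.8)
The plan is to prove the two implications separately, exploiting the fact that $\overline{\gamma}$ is exactly the problem (\ref{gamma_prob}) defining $\gamma^*$ supplemented by the single extra requirement $\mathcal{T}_b\subseteq\mathbf{\Omega}$. Since a travel cost range is nonnegative by construction, both quantities are at least zero, so in each direction it suffices to exhibit a step function feasible for the other problem and attaining range zero. The substance of the proposition is that the restriction $\mathcal{T}_b\subseteq\mathbf{\Omega}$ discards no zero-range (constant-cost) solution.

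First I would treat the implication $\overline{\gamma}=0\Rightarrow\gamma^*=0$, which is the routine direction. Up to the threshold imposed on $b(t)$, every step function admitted in (\ref{gamma_prob1}) is also a step function admitted in (\ref{gamma_prob}), since the constraint $\mathcal{T}_b\subseteq\mathbf{\Omega}$ only shrinks the feasible set; this already suggests $\gamma^*\le\overline{\gamma}$. The only point needing care is the mismatch between the strict bound $b(t)>\rho$ in (\ref{gamma_prob}) and the non-strict bound $b(t)\ge\rho$ in (\ref{gamma_prob1}). I would dispose of it by observing that $c_{ij}(t,\mathbf{b})$ is linear in the vector of step values (see (\ref{c_def})), so that $\gamma(\alpha\mathbf{b})=\alpha\,\gamma(\mathbf{b})$ for every $\alpha>0$; hence a zero-range minimizer of (\ref{gamma_prob1}) can be rescaled by any $\alpha>1$ to obtain a function with $b(t)\ge\alpha\rho>\rho$ and the same (zero) range, which is then feasible for (\ref{gamma_prob}) and forces $\gamma^*=0$.

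The substantive direction is $\gamma^*=0\Rightarrow\overline{\gamma}=0$, and here the key tool is Theorem \ref{T1_C}. If $\gamma^*=0$ then, by Proposition \ref{omega}, $G\in\mathcal{L}_C$, so there is a step function $\mathbf{b}^*$ generating a constant travel cost on every arc, i.e. $\mathcal{T}_c(i,j,\mathcal{T}_{b^*})=\emptyset$ for all $(i,j)\in A$. The plan is to show that $\mathbf{b}^*$ itself is already feasible for the restricted problem (\ref{gamma_prob1}); its step values inherit the required bound on $b(t)$, so the only thing left to verify is $\mathcal{T}_{b^*}\subseteq\mathbf{\Omega}$. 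This is precisely the content of Theorem \ref{T1_C}: implication (\ref{cc}) states that every breakpoint of a constant-cost step function lies in $\mathbf{\Omega}_{ij}$ for each arc, and since this holds simultaneously for all arcs, each such breakpoint lies in $\bigcap_{(i,j)\in A}\mathbf{\Omega}_{ij}=\mathbf{\Omega}$. Therefore $\mathbf{b}^*$ satisfies $\mathcal{T}_{b^*}\subseteq\mathbf{\Omega}$, is feasible for (\ref{gamma_prob1}), and attains $\gamma(\mathbf{b}^*)=0$, whence $\overline{\gamma}\le 0$ and so $\overline{\gamma}=0$.

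I expect the main obstacle to be entirely concentrated in this second direction, and more precisely in the appeal to Theorem \ref{T1_C}: the whole force of the proposition is that confining the candidate breakpoints to the finite set $\mathbf{\Omega}$ loses no constant-cost solution. The rescaling bookkeeping for the $\rho$-threshold and the monotonicity giving $\gamma^*\le\overline{\gamma}$ are straightforward once one notes the linear dependence of $c_{ij}(t,\mathbf{b})$ on $\mathbf{b}$; the genuine mathematical input is the structural characterization already established, namely that any $\mathbf{b}^*$ realizing a constant cost on all arcs has all of its breakpoints inside $\mathbf{\Omega}$.
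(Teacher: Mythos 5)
Your proof is correct and follows essentially the same route as the paper's: both directions rest on Theorem \ref{T1_C} (every constant-cost step function has all its breakpoints in $\mathbf{\Omega}$) together with positive rescaling of $\mathbf{b}$, which preserves a zero range while adjusting the lower bound on $b(t)$; the paper merely interposes an auxiliary quantity $\overline{\gamma}_1$ to separate these two ingredients into Parts I and II. One small point of care: the constant-cost witness supplied by membership in $\mathcal{L}_C$ need not satisfy $b(t)\ge\rho$ a priori, so in the direction $\gamma^*=0\Rightarrow\overline{\gamma}=0$ you should either take $\mathbf{b}^*$ to be the minimizer of (\ref{gamma_prob}) itself (which is constrained to $b(t)>\rho$) or apply the same rescaling you used in the other direction --- which is exactly what the paper's Part II does.
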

\begin{proof}
Let us denote with  $\overline{\gamma}_1$ an upper bound of $\gamma^*$ defined as follows:
\textcolor{black}{$$\overline{\gamma}_1:=\min\limits_{\textbf{b}}( \gamma(\textbf{b})|\mathcal{T}_b\subseteq\mathbf{\Omega}\wedge b(t)>0\quad \forall t\geq0).$$}
We have that:
$$\gamma^*\leq\bar\gamma_1\leq\bar\gamma.$$
We divide the demonstration in two parts.\\
\underline{Part I}. First, we prove that $(\gamma^*=0\Leftrightarrow\overline{\gamma}_1=0)$.
Since $\overline{\gamma}_1$ is an upper bound of $\gamma^*$, then the necessity part is proved, that is  $\overline{\gamma}_1=0\Rightarrow\gamma^*=0$.
As far as the sufficiency part is concerned, we observe that the main implication of Theorem \ref{T1_C} is that $\mathcal{T}_{b^*}\subseteq\mathbf{\Omega}$. 
This implies that $\gamma^*=0\Rightarrow\overline{\gamma}_1=0$.\\
\underline{Part II}. We now prove that  $(\overline{\gamma}_1=0\Leftrightarrow\overline{\gamma}=0)$.
Since $\overline{\gamma}$ is an upper bound of $\overline{\gamma}_1$, then the necessity part is proved, that is  $\overline{\gamma}=0\Rightarrow\overline{\gamma}_1=0$.
Therefore, we only need to prove that, when $\overline{\gamma}_1$ is equal to zero,  it is always possible to determine a step cost function \textcolor{black}{$\overline{b}(t)\geq\rho$} generating on each arc $(i,j)\in A$ a constant travel cost function. 
Let us suppose that there exists a step cost function \textcolor{black}{$0<b^*(t)<\rho$}, generating  travel constant cost functions, that is 
$$c_{ij}(t, \mathbf{b}^*)=\underline{c}_{ij}(\mathbf{b}^*),$$
with $(i,j)\in A$ and $t\geq 0$. Since \textcolor{black}{$b^*(t)>0$}, then it is always possible to determine a positive number $\alpha>0$, such that:
$$\textcolor{black}{\alpha\times b^*(t)\geq\rho,}$$
with $t\geq0.$ If we set $\overline{b}(t)$ equal to $\alpha\times b^*(t)$, then we have that
\textcolor{black}{$$c_{ij}(t,\overline{\mathbf{b}})=\alpha\times\underline{c}_{ij}(\mathbf{b}^*),$$
with $t\geq0$. This implies that }
$$\gamma(\overline{\mathbf{b}},\mathcal{T}_{\overline{b}})=\alpha\times\gamma(\mathbf{b}^*,\mathcal{T}_{b^*})=0,$$
which proves the thesis.\\
\end{proof}

From Theorem \ref{T1_C} and Proposition \ref{ub} , it follows that $\mathbf{1}_{C}(G)$ can be reformulated as follows:
\begin{equation}\label{1_C3}
\mathbf{1}_{C}(G):=\begin{cases} 1& if\quad \overline{\gamma}=0 \\0 \quad& otherwise  \end{cases}. 
\end{equation}
As illustrated in the following section, the main advantage of formulation (\ref{1_C3}) is that the travel cost range $\overline{\gamma}$ is the optimal objective function value of a Linear Programming (LP) problem.
\subsection{Decidability of  the \textcolor{black}{CTCP}}

In order to assess the decidability of the \textcolor{black}{CTCP}, we outline an algorithm consisting of two steps. The former determines the set $\mathbf{\Omega}$ of potential breakpoints of $b^*(t)$. The latter  determines the optimal objective function value of  the optimization problem (\ref{gamma_prob1}). \\
\indent We recall that $\mathbf{\Omega}$ is the intersection of $|A|$ sets of time instants. 
Moreover each  $\mathbf{\Omega}_{ij}$ is the union of a finite number of time sequences $\mathbold{\omega}_{ij}(t)$, one for each travel time breakpoint $t\in\mathcal{T}$ and arc $(i,j)\in A$. Therefore, we only need to devise an iterative procedure that, given a time instant $t$ and a travel time function $\tau_{ij}(t)$, determines the ordered set $\mathbold{\omega}_{ij}(t)$ in a finite number of iteration steps. For this purpose we propose the Algorithm 1 consisting of two main steps.
\begin{enumerate}[(a)]
\item Firstly $\mathbold{\omega}_{ij}(t)$ is iteratively enriched by the arrival time $\Gamma_{ij}(t)$  associated to a start time equal to $t$, by the arrival time $\Gamma_{ij}(\Gamma_{ij}(t))$ associated to a start time equal to $\Gamma_{ij}(t)$, etc, until no time instant less than or equal to $T$ can be generated.
\item Finally, $\mathbold{\omega}_{ij}(t)$ is iteratively enriched by the start time $\Gamma_{ij}^{-1}(t)$  associated to an arrival time equal to $t$, by the start time $\Gamma_{ij}^{-1}(\Gamma_{ij}^{-1}(t))$ associated to an arrival time equal to $\Gamma_{ij}^{-1}(t)$, etc, until no time instant greater than or equal to $0$ can be generated.
\end{enumerate}
For example by running  Algorithm 1 iteratively for each distinct breakpoint of $\tau_{ij}(t)$ of Figure 1, we obtain the time sequences  $\mathbold{\omega}_{ij}(4.0)=\{0.0,2.0,4.0\}$  and $\mathbold{\omega}_{ij}(5.0)=\{1.0,3.0,5.0\}$. Therefore the set $\mathbf{\Omega}_{ij}$ consists of the time instants $\{0.0,1.0,2.0,$ $3.0,4.0,5.0\}.$ \\
\begin{prop}\label{Algo1}
Algorithm 1  converges in a finite number of iterations. 
\end{prop}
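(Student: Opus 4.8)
The plan is to show that each of the two phases of Algorithm 1 — the forward enrichment of step (a) and the backward enrichment of step (b) — halts after a number of iterations bounded by a quantity depending only on the horizon $T$ and the minimum traversal time, so that their union is finite. The starting point is the quantitative observation that $\tau_{min}:=\min_{(i,j)\in A}(\tau_{ij}(t)\,|\,t\ge 0)$ is strictly positive. This follows from the positivity of the travel time functions together with the fact that each $\tau_{ij}$ is continuous piecewise linear and constant in the long run, so the infimum is attained and is bounded away from zero. Consequently, for every $s\ge 0$ one has $\Gamma_{ij}(s)=s+\tau_{ij}(s)\ge s+\tau_{min}$.

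For the forward phase I would use a monotone-progress argument. Each newly generated iterate satisfies $\omega_{\ell}=\Gamma_{ij}(\omega_{\ell-1})\ge \omega_{\ell-1}+\tau_{min}$, so the generated time instants increase with a uniform increment of at least $\tau_{min}$. Since step (a) halts as soon as an iterate exceeds $T$, after at most $\lceil T/\tau_{min}\rceil$ applications of $\Gamma_{ij}$ the generated value surpasses $T$ and the forward enrichment terminates.

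For the backward phase I would invoke the analogous decreasing estimate for the inverse map. The FIFO property makes $\Gamma_{ij}$ strictly increasing, hence invertible on its range, so $\Gamma_{ij}^{-1}$ is well defined at every iterate the algorithm evaluates. Writing $\Gamma_{ij}(s)=t$ gives $s=t-\tau_{ij}(s)\le t-\tau_{min}$, so each backward step decreases the current time instant by at least $\tau_{min}$. Starting from a value in $[0,T]$ and halting once an iterate falls below $0$, step (b) therefore also terminates after at most $\lceil T/\tau_{min}\rceil$ steps. Combining the two bounds, the ordered set $\mathbold{\omega}_{ij}(t)$ is produced using at most $2\lceil T/\tau_{min}\rceil$ iterations, which establishes finite convergence.

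I expect the main obstacle to be the careful justification of the two facts that make the uniform per-step bound valid: first, that $\tau_{min}>0$, so that progress cannot degenerate to arbitrarily small increments near a breakpoint or in the long-run tail where $\tau$ is constant; and second, that $\Gamma_{ij}^{-1}$ is genuinely well defined precisely at the arrival times where the backward iteration evaluates it. Both rest on the FIFO property and the long-run-constant assumption already fixed in Section \ref{sec_1}, so the remaining work is simply to assemble these into the two bounded counting arguments above.
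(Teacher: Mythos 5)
Your proof is correct and follows the same basic route as the paper's: argue that each of the two while-loops must terminate because the iterates move monotonically toward the stopping threshold. The difference is one of rigor rather than strategy. The paper's proof consists of a single observation, namely that by the FIFO hypothesis $\Gamma_{ij}$ is strictly increasing (and the backward iterates strictly decreasing), and concludes termination from that alone; strictly speaking this is incomplete, since a strictly monotone sequence can have vanishing increments and never cross the threshold $T$ (or $0$). You supply exactly the missing quantitative ingredient: the uniform per-step progress $\Gamma_{ij}(s)\ge s+\tau_{min}$ with $\tau_{min}>0$, which the paper itself uses only later (in the proof of Theorem \ref{T1_C}, to bound $|\mathbf{\Omega}|$ by $|A|\times|\mathcal{T}|\times\frac{T}{\tau_{min}}$). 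Your justification that $\tau_{min}$ is attained and strictly positive (continuity, piecewise linearity, and constancy in the long run) and your remark on the well-definedness of $\Gamma_{ij}^{-1}$ at the points where the backward loop evaluates it are both sound, and your explicit bound of $\lceil T/\tau_{min}\rceil$ iterations per phase is consistent with the cardinality bound the paper states for $\mathbf{\Omega}$. In short: same idea, but your write-up closes a small logical gap in the published argument.
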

\begin{proof}
The thesis is proved by observing that each  while-condition evaluates to false after a finite number of iterations. Indeed, due to the FIFO hypothesis,  the arrival time function $\Gamma_{ij}(t)$ as well as the corresponding inverse function $\Gamma_{ij}^{-1}(t)$ are, respectively, strictly increasing and strictly decreasing functions, with $ (i,j)\in A$. 
\end{proof}

\begin{algorithm} \caption{\textcolor{black}{Determining the timed sequence $\mathbold{\omega}_{ij}(t)$ } }
\label{Alg1}
\begin{algorithmic}[h]
\State \textcolor{black}{INPUT: A continuous piecewise linear FIFO travel time function $\tau_{ij}(t)$ and a time instant $t$.} 
\State \textcolor{black}{OUTPUT: The  ordered set $\mathbold{\omega}$ of time instants. }

\State $\mathbold{\omega}\leftarrow\emptyset$
\State $\mathbold{\omega} \leftarrow t$
\State $t^{\prime}\leftarrow t$ 
\While {$(\Gamma_{ij}(t^{\prime})\leq T) \wedge ( \Gamma_{ij}(t^{\prime}) \notin \mathbold{\omega})$}
\State $\mathbold{\omega} \leftarrow \Gamma_{ij}(t^{\prime})$
\State $t^{\prime}\leftarrow  \Gamma_{ij}(t^{\prime})$ 
\EndWhile
\State $t^{\prime}\leftarrow t$ 
\While {$(\Gamma_{ij}^{-1}(t^{\prime})\geq 0) \wedge ( \Gamma_{ij}^{-1}(t^{\prime}) \notin \mathbold{\omega})$}
\State $\mathbold{\omega} \leftarrow \Gamma_{ij}^{-1}(t^{\prime})$
\State $t^{\prime}\leftarrow  \Gamma_{ij}^{-1}(t^{\prime})$ 
\EndWhile

\end{algorithmic} \end{algorithm}
Once it has been determined the set $\mathbf{\Omega}$,  it is possible to formulate an instance of  the linear program (\ref{obj})-(\ref{c7}).
A solution of such linear programming model  represents the parameters of a constant piecewise function $y(t)$ and a continuous piecewise linear function $x_{ij}(t)$, with $(i,j)\in A$. We partition the time horizon into a \textit{finite} number $|\mathbf{\Omega}|$ of time slots $[t_h,t_{h+1}] (h=0,\dots,|\mathbf{\Omega}|-1)$.
The  continuous variable $y_h$ represents the value of $y(t)$ during the $h-th$ time interval, that is:
$$y(t)=y_h,$$
with $t\in[t_h,t_{h+1}]$ and $h=0,\dots,|\mathbf{\Omega}|-1$. The function $x_{ij}(t)$ is the continuous piecewise linear function corresponding to the linear interpolation of the points $(t_{ijk},x_{ijk})$, that is :
$$x_{ij}(t_{ijk})=x_{ijk},$$
where $x_{ijk}$ is a  continuous variable, with $t_{ijk}\in\mathbf{\Omega}_{ij}$, $k=0,\dots,|\mathbf{\Omega}_{ij}-1|$ and $(i,j)\in A$.  The continuous variables $\underline{x}_{ij}$ and $\overline{x}_{ij}$ represent, respectively, the maximum and minimum value of $x_{ij}(t)$, with $(i,j)\in A$.
The continuous variable $\zeta_{ij}$ represents the range of the continuous piecewise linear function $x_{ij}(t)$, with $(i,j)\in A$. Finally the continuous variable $\zeta$ represents the max-min range value associated to the overall set of $\mathbf{x}$s functions generated by the problem. 

\begin{equation}\label{obj}
\zeta^*:=\min \zeta
\end{equation}
s.t.
{\allowdisplaybreaks
\begin{flalign}
\label{c1}
&x_{ijk}=\sum\limits_{h=0}^{|\mathbf{\Omega}|-1}a_{ijkh}\times y_h \quad\quad\quad\quad\quad\quad\quad\quad k=0,\dots,|\mathbf{\Omega}_{ij}|-1 ,\quad (i,j)\in A&& \\
\label{c2}
&\zeta\geq\overline{x}_{ij}-\underline{x}_{ij}\quad\quad\quad\quad\quad\quad\quad\quad\quad\quad\quad\quad  (i,j)\in A &&\\
\label{c3}
&\underline{x}_{ij}\leq x_{ijk}\quad\quad\quad\quad\quad\quad\quad\quad\quad\quad\quad\quad\quad k=0,\dots,|\mathbf{\Omega}_{ij}|-1,  \quad (i,j)\in A&&\\
\label{c4}
&\overline{x}_{ij}\geq x_{ijk}\quad\quad\quad\quad\quad\quad\quad\quad\quad\quad\quad\quad\quad k=0,\dots,|\mathbf{\Omega}_{ij}|-1 , \quad (i,j)\in A&&\\
\label{c5}
& y_{h}\geq\textcolor{black}{\rho} \quad\quad\quad\quad\quad\quad\quad\quad\quad\quad\quad\quad\quad\quad\quad h=0,\dots,|\mathbf{\Omega}|-1&&\\
\label{c6}
& x_{ijk}\geq0 \quad\quad\quad\quad\quad\quad\quad\quad\quad\quad\quad\quad\quad\quad k=0,\dots,|\mathbf{\Omega}_{ij}|-1 , \quad (i,j)\in A&&\\
\label{c7}
& \zeta\geq0 \quad\quad\quad\quad\quad\quad\quad\quad\quad\quad\quad\quad\quad &&
\end{flalign}
}

The objective function (\ref{obj}) states that the optimization model aims to determine a constant stepwise function $y^*(t)$, such that it is minimized the maximum range value of the corresponding $\mathbf{x}$s functions. Constraints (\ref{c1}) state the relationship between $y(t)$ and $x_{ij}(t)$ at time instant $t_{ijk}\in\mathbf{\Omega}_{ij}$. In particular each coefficient $a_{ijkh}$ represents the time spent on the arc $(i,j)\in A$ during period $[t_{h},t_{h+1}]$  if the start time is $t_{ijk}\in\mathbf{\Omega}_{ij}$, i.e.  
\begin{equation}\label{Asign1}
a_{ijkh}=\begin{cases}\min(t_{h+1}-t_{h},\max(0,\Gamma_{ij}(t_{ijk})-t_{h})) \quad& if\quad k\leq h \\0 \quad& if \quad k>h \end{cases},
\end{equation}
with $h=0,\dots,|\mathbf{\Omega}|-1$, $k=0,\dots,|\mathbf{\Omega}_{ij}|-1$.
Constraints (\ref{c2}) state the relationship between the objective function $\zeta$  and the range value of $x_{ij}(t)$, modeled as the difference between $\overline{x}_{ij}$ and  $\underline{x}_{ij}$. Constraints (\ref{c3}) and  (\ref{c4}) state the relationship between $\underline{x}_{ij}$, $\overline{x}_{ij}$ and the continuous variables $x_{ijk}$. Constraints (\ref{c5}) state that the  \textcolor{black}{constant stepwise} linear function \textcolor{black}{ $y(t)$ has to be greater  or equal than the input parameter $\rho>0$}. Constraints (\ref{c6}) and (\ref{c7}) provide the non-negative conditions of the remaining decision variables. 
\begin{theorem}\label{cert_feas}
Given a time-dependent graph $G$,  the optimal solution of the linear program (\ref{obj})-(\ref{c7}) is also optimal for the optimization problem (\ref{gamma_prob1}).
\end{theorem}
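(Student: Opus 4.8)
The plan is to exhibit a value-preserving correspondence between the feasible step functions of (\ref{gamma_prob1}) and the feasible points of the linear program (\ref{obj})--(\ref{c7}), and then to show that under this correspondence the objective $\zeta$ coincides with $\gamma(\mathbf{b})$. First I would read off from a feasible LP solution $(\mathbf{y}, \mathbf{x}, \underline{\mathbf{x}}, \overline{\mathbf{x}}, \zeta)$ the step function $\mathbf{b}$ whose value on the $h$-th slot $[t_h, t_{h+1}]$, with breakpoints taken from $\mathbf{\Omega}$, equals $y_h$; constraint (\ref{c5}) guarantees $b(t) \geq \rho$ and $\mathcal{T}_b \subseteq \mathbf{\Omega}$, so $\mathbf{b}$ is feasible for (\ref{gamma_prob1}), and conversely every such $\mathbf{b}$ yields admissible $y_h$. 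The first routine step is to verify that the coefficients $a_{ijkh}$ of (\ref{Asign1}) encode exactly the time spent on arc $(i,j)$ inside slot $[t_h,t_{h+1}]$ when departure occurs at $t_{ijk}$; substituting this into (\ref{c1}) reproduces the three-term expression (\ref{c_def}), so that $x_{ijk} = c_{ij}(t_{ijk}, \mathbf{b})$ at every $t_{ijk} \in \mathbf{\Omega}_{ij}$.

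The heart of the argument is to show that evaluating $c_{ij}(\cdot, \mathbf{b})$ only on the finite grid $\mathbf{\Omega}_{ij}$ suffices to recover its range over the whole horizon $[0,T]$. I would establish the key claim that every breakpoint of $c_{ij}(\cdot, \mathbf{b})$ lies in $\mathbf{\Omega}_{ij}$. By the necessary condition (\ref{t_c}), a breakpoint $t$ of $c_{ij}$ satisfies $t\in\mathcal{T}_b$, or $t\in\mathcal{T}_\tau(i,j)$, or $\Gamma_{ij}(t)\in\mathcal{T}_b$. In the first case $t\in\mathcal{T}_b\subseteq\mathbf{\Omega}\subseteq\mathbf{\Omega}_{ij}$; in the second $t\in\mathcal{T}_\tau(i,j)\subseteq\mathbf{\Omega}_{ij}$ by the very definition (\ref{omegaij}) of $\mathbf{\Omega}_{ij}$, since $t$ generates a time sequence containing itself. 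For the third case I would use the closure of $\mathbf{\Omega}_{ij}$ under $\Gamma_{ij}^{-1}$: each element of $\mathbf{\Omega}_{ij}$ belongs to a single time sequence (Definition \ref{def_0}), and only the first element $\omega_1$ of a sequence has $\Gamma_{ij}^{-1}(\omega_1)<0$, so any point of $\mathbf{\Omega}_{ij}$ with a non-negative $\Gamma_{ij}^{-1}$-image keeps that image inside $\mathbf{\Omega}_{ij}$; applied to $\Gamma_{ij}(t)\in\mathcal{T}_b\subseteq\mathbf{\Omega}_{ij}$ this yields $t=\Gamma_{ij}^{-1}(\Gamma_{ij}(t))\in\mathbf{\Omega}_{ij}$. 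Because $c_{ij}(\cdot,\mathbf{b})$ is continuous piecewise linear, its maximum and minimum on $[0,T]$ are attained at breakpoints, hence at points of $\mathbf{\Omega}_{ij}$; therefore $\max_k x_{ijk}=\max_{t\in[0,T]}c_{ij}(t,\mathbf{b})$ and $\min_k x_{ijk}=\min_{t\in[0,T]}c_{ij}(t,\mathbf{b})$.

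With this in hand the remaining steps are bookkeeping. At optimality, constraints (\ref{c3})--(\ref{c4}) force $\underline{x}_{ij}=\min_k x_{ijk}$ and $\overline{x}_{ij}=\max_k x_{ijk}$, so $\overline{x}_{ij}-\underline{x}_{ij}$ equals the range of $c_{ij}(\cdot,\mathbf{b})$ over $[0,T]$; then (\ref{c2}) together with the minimization of $\zeta$ makes $\zeta=\max_{(i,j)\in A}\bigl(\overline{x}_{ij}-\underline{x}_{ij}\bigr)=\gamma(\mathbf{b})$. Thus the correspondence $\mathbf{b}\leftrightarrow(\mathbf{y},\mathbf{x},\dots)$ carries the objective of (\ref{gamma_prob1}) onto the LP objective, and minimizing one minimizes the other, giving $\zeta^*=\overline{\gamma}$ and the optimality of the LP solution for (\ref{gamma_prob1}). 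The main obstacle is the key claim of the middle paragraph: it is the only place where the specific construction of $\mathbf{\Omega}_{ij}$ as a union of complete $\Gamma_{ij}$-orbits is used, and getting the closure-under-$\Gamma_{ij}^{-1}$ argument right, including the boundary cases near $t=0$ and $t=T$ where a sequence begins or ends, is the delicate point; everything before and after it is direct verification.
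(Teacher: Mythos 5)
Your proposal is correct and follows essentially the same route as the paper's proof: both verify via the coefficients $a_{ijkh}$ that $x_{ijk}=c_{ij}(t_{ijk},\mathbf{b})$ on the grid, and both reduce the theorem to the key claim that every breakpoint of $c_{ij}(\cdot,\mathbf{b})$ lies in $\mathbf{\Omega}_{ij}$, handled by the same three-case analysis of the necessary condition (\ref{t_c}) together with the closure of $\mathbf{\Omega}_{ij}$ under $\Gamma_{ij}^{-1}$ along complete time sequences. Your write-up is, if anything, slightly more explicit about the two-way feasibility correspondence and about why grid extrema capture the true range, but the argument is the same.
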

\begin{proof}
Let $(\mathcal{T}_{\mathbf{y}^*},\mathbf{y}^*)$ denote the parameters of the constant stepwise function $y^*(t)$ optimal for the linear program (\ref{obj})-(\ref{c7}). We observe that  $y^*(t)$ can be used \textit{to generate} two continuous piecewise linear functions for each arc $(i,j)\in A$. The former is the travel cost function $c_{ij}(t,\mathbf{y}^*)$. The latter is  the  continuous piecewise linear $x^*_{ij}(t)$ defined as the linear interpolation associated to the optimal values  $x^*_{ijk}$ with $k=0,\dots|\mathbf{\Omega}_{ij}|-1$ and $(i,j)\in A$. We want to prove that $c_{ij}(t,\mathbf{y}^*)=x^*_{ij}(t)$ for $t\geq 0$. We start by observing that the right hand side of  constraints  (\ref{c1}) corresponds to the right-hand side of (\ref{c_def}) when $t=t_{ijk}\in \mathbf{\Omega}_{ij}$, that is $c_{ij}(t_{ijk},\mathbf{y}^*)=x^*_{ij}(t_{ijk})$, with $(i,j)\in A$ and $k=0,\dots|\mathbf{\Omega}_{ij}-1|$. Therefore, the thesis is proved if we demonstrate that  all breakpoints of each travel cost function  $c_{ij}(t,\mathbf{y}^*)$ belong to $\mathbf{\Omega}_{ij}$, with $(i,j)\in A$. 
From (\ref{t_c}) we have that:
\begin{equation}\label{c11}
 t\in\mathcal{T}_{c}(i,j,\mathbf{y}^*)\quad\Rightarrow \quad t\in \mathcal{T}_{y^*} \vee t\in\mathcal{T}_{\tau} (i,j ) \vee\Gamma_{ij}(t)\in \mathcal{T}_{y^*}.
 \end{equation}
Since there is no constraint stating that $y^*_h\neq y^*_{h+1}$, then  
the breakpoints set $\mathcal{T}_{y^*}$ is a subset of each $\mathbf{\Omega}_{ij}$, with $(i,j)\in A$. Moreover, from the definition of $\mathbf{\Omega}_{ij}$, we have that all breakpoints of $\tau_{ij}(t)$ belongs to it. Finally we observe that from the definition of $\mathbf{\Omega}_{ij}$, it descends that:
$$\Gamma_{ij}(t)\in \mathcal{T}_{y^*}\subseteq\mathbf{\Omega}_{ij}\Rightarrow \mathbold{\omega}_{ij}(\Gamma_{ij}(t))\subseteq\mathbf{\Omega}_{ij}\Rightarrow t\in \mathbf{\Omega}_{ij}$$
Therefore, the implication (\ref{c11}) can be rewritten as:
$$t\in\mathcal{T}_c(i,j,\mathcal{T}_{y^*})\Rightarrow t\in\mathbf{\Omega}_{ij},$$
which proves the thesis.
\end{proof}
From Proposition \ref{Algo1} and  Theorem \ref{cert_feas} it descends that, given a time-dependent graph $G=(V,A,\tau)$, the value of the characteristic function  $\mathbf{1}_C(G)$ can be computed as follows. \\
\underline{Step 1}]. Determine $\mathbf{\Omega}$ by iteratively running Algorithm 1, for each arc $(i,j)\in A$ and each breakpoint of the  travel time function $\tau_{ij}(t)$.\\
\underline{Step 2}. Solve the linear program (\ref{obj})-(\ref{c7}). The value the function $\mathbf{1}_C(G)$ takes on as output  is 
$$\mathbf{1}_{C}(G):=\begin{cases} 1& if \quad \zeta^*=0 \\0 \quad&otherwise  \end{cases}. $$

\section {A lower bounding procedure}\label{sec_4}
If the \textcolor{black}{CTCP} feasibility check fails, then the optimal solution of the linear program  (\ref{obj})-(\ref{c7}) can be used to determine  "good" lower bounds for the optimal solutions of the class of time-dependent routing problems (\ref{opt_ref}). \\
\indent We preliminarily define a parameterized family of travel time functions $\underline{\tau}(\textbf{b})$, where the parameter is the step function $\textbf{b}$. In particular, given a time-dependent graph $G=(V,A,\tau)$,  the travel time function $\underline{\tau}_{ij}(t,\textbf{b})$ we generate is a \textit{lower approximation} of the \textit{original} travel time function $\tau_{ij}(t)$, that is:
$$\underline{\tau}_{ij}(t,\textbf{b})\leq\tau_{ij}(t),$$
for each $t\geq 0$ and $(i,j)\in A$. We start by observing that, given a step function $\textbf{b}$, it  is univocally associated to each arc $(i,j)\in A$ the travel cost function $c_{ij}(t,\textbf{b})$. In order to generate $\underline{\tau}_{ij}(t, \textbf{b})$, we first \textit{approximate} such travel cost function to its minimum value $\underline{c}_{ij}(\textbf{b})=\min\limits_{t}c_{ij}(t,\textbf{b})$.  Then $\underline{\tau}_{ij}(t,\textbf{b})$ is defined as the output function of the travel time model proposed by \cite{Ichoua2003} (IGP model).
\begin{definition}
The travel time function $\underline{\tau}_{ij}(t,\textbf{b})$ is the continuous piecewise linear function generated by IGP model, where the input parameters are the step function $\textbf{b}$ and the constant value  $\underline{c}_{ij}(\textbf{b})$.
\end{definition} 
According to the IGP model, given a start  time $t$ the travel time value $\underline{\tau}_{ij}(t,\textbf{b})$ is computed by an iterative procedure (Algorithm \ref{alg:alg2}). The relationship between the  input parameters  and the output value of Algorithm \ref{alg:alg2} can be expressed in a compact fashion as follows:
\begin{equation}\label{IGP}
\underline{c}_{ij}(\textbf{b})=\int_{t}^{t+\underline{\tau}_{ij}(t,\textbf{b})}b(\mu)d\mu.
\end{equation}
\begin{algorithm}[H]
\caption{Computing the travel time  $\underline{\tau}_{ij}(t,\textbf{b})$}
\label{alg:alg2}

\begin{algorithmic}
\State INPUT: A step function $\mathbf{b}=[(t_0,b_0),\dots,(t_{|\mathcal{T}_b-1|},b_{|\mathcal{T}_b-1|})]$, a constant cost $\underline{c}_{ij}(\textbf{b})$  and the start time $t$
\State

\State $k \leftarrow p: t_{p} \le t \le t_{p+1}$

\State $d \leftarrow\underline{c}_{ij}(\textbf{b})$

\State $t' \leftarrow t + d / b_{p}$

\While{$t' > t_{k+1}$}

\State $d \leftarrow d - b_{k}(t_{k+1}-t)$ 

\State $t \leftarrow t_{k+1}$

\State $t^{\prime} \leftarrow t + d/b_{k+1}$

\State $k \leftarrow k+1$

\EndWhile
\Return $t' - t$

\end{algorithmic}

\end{algorithm}
\begin{prop}\label{low_approx}
The travel time function $\underline{\tau}_{ij}(t,\textbf{b})$ is a lower approximation of the original travel time function $\tau$.
\end{prop}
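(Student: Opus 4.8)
The plan is to compare the two travel times through the common cost they both encode under the integral of the step function $\mathbf{b}$. First I would recall, from (\ref{c_def}) and exactly as in the proof of Proposition \ref{prop_rank}, that for any step function $\mathbf{b}$ the travel cost admits the integral representation
$$c_{ij}(t,\mathbf{b})=\int_{t}^{t+\tau_{ij}(t)}b(\mu)\,d\mu.$$
By the definition $\underline{c}_{ij}(\mathbf{b})=\min_{t}c_{ij}(t,\mathbf{b})$, we have $\underline{c}_{ij}(\mathbf{b})\leq c_{ij}(t,\mathbf{b})$ for every $t\geq 0$, so that
$$\int_{t}^{t+\underline{\tau}_{ij}(t,\mathbf{b})}b(\mu)\,d\mu=\underline{c}_{ij}(\mathbf{b})\leq\int_{t}^{t+\tau_{ij}(t)}b(\mu)\,d\mu,$$
where the left-hand equality is precisely the defining relation (\ref{IGP}) satisfied by the IGP output.

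The second step is to transfer this inequality between the two integrals into the desired inequality between $\underline{\tau}_{ij}(t,\mathbf{b})$ and $\tau_{ij}(t)$. Both integrals share the same lower limit $t$ and the same integrand $b$, which is strictly positive since $b_h>0$ for all $h$. Hence the map $s\mapsto\int_{t}^{t+s}b(\mu)\,d\mu$ is strictly increasing in $s$, and the inequality between the integrals forces
$$\underline{\tau}_{ij}(t,\mathbf{b})\leq\tau_{ij}(t),$$
for each $t\geq 0$ and each $(i,j)\in A$, which is the thesis.

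I do not expect a substantial obstacle here: the argument reduces to a one-line monotonicity comparison once the integral representation is in place. The only point requiring care is to confirm that the output of Algorithm \ref{alg:alg2} genuinely satisfies (\ref{IGP}), i.e. that the iterative procedure returns the unique $s=\underline{\tau}_{ij}(t,\mathbf{b})$ with $\int_{t}^{t+s}b(\mu)\,d\mu=\underline{c}_{ij}(\mathbf{b})$. Existence and uniqueness of this $s$ follow again from the strict positivity of $b$: the integral is continuous and strictly increasing in $s$, vanishing at $s=0$ and tending to $+\infty$ because $b$ is bounded below by its strictly positive long-run value, so it attains the nonnegative level $\underline{c}_{ij}(\mathbf{b})$ exactly once. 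Once (\ref{IGP}) is granted, the comparison above completes the proof.
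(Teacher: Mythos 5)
Your proposal is correct and follows essentially the same route as the paper: both rest on the integral representation $c_{ij}(t,\mathbf{b})=\int_{t}^{t+\tau_{ij}(t)}b(\mu)\,d\mu$, the relation (\ref{IGP}), and the observation that $\underline{c}_{ij}(\mathbf{b})\leq c_{ij}(t,\mathbf{b})$ forces $\underline{\tau}_{ij}(t,\mathbf{b})\leq\tau_{ij}(t)$. You merely make explicit the strict monotonicity of $s\mapsto\int_{t}^{t+s}b(\mu)\,d\mu$ that the paper leaves implicit in its stated equivalence, which is a welcome clarification but not a different argument.
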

\begin{proof}
We observe that a compact formulation of (\ref{c_def}) is the following:
$$c_{ij}(t,\textbf{b})=\int_{t}^{t+\tau_{ij}(t)}b(\mu)d\mu.$$
From (\ref{IGP}) and the definition of $\underline{c}_{ij}(\textbf{b})$, it results that:
$$\underline{c}_{ij}(\textbf{b})\leq c_{ij}(t,\textbf{b})\Leftrightarrow \underline{\tau}_{ij}(t,\textbf{b})\leq\tau_{ij}(t),$$
with $t\geq 0$, which proves the thesis.
\end{proof}
\begin{prop}\label{G_b_rank_inv}
Given a time-dependent graph $G$ and a step function $\textbf{b}$, the time depedent graph $\underline{G}_{\textbf{b}}=(V,A,\underline{\tau}(\textbf{b}))$ is path ranking invariant.
\end{prop}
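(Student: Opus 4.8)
The plan is to reduce the statement to the already-proven Proposition \ref{prop_rank} by showing that $\underline{G}_{\textbf{b}}$ belongs to the class $\mathcal{L}_C$. Recall that Proposition \ref{prop_rank} asserts that any graph in $\mathcal{L}_C$ is path ranking invariant, so it suffices to exhibit a single step function that generates a constant travel cost on every arc of $\underline{G}_{\textbf{b}}$. The natural candidate witness is the very step function $\textbf{b}$ used to construct $\underline{G}_{\textbf{b}}$, so I would set $\textbf{b}^*:=\textbf{b}$ and verify the constancy of the induced costs.

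First I would recall that, by construction, each travel time function $\underline{\tau}_{ij}(\cdot,\textbf{b})$ is the output of the IGP model fed with the step function $\textbf{b}$ and the constant target cost $\underline{c}_{ij}(\textbf{b})=\min_{t}c_{ij}(t,\textbf{b})$. The defining relation (\ref{IGP}) states precisely that
$$\underline{c}_{ij}(\textbf{b})=\int_{t}^{t+\underline{\tau}_{ij}(t,\textbf{b})}b(\mu)\,d\mu$$
holds for every start time $t\geq 0$. Reading the right-hand side through the cost model (\ref{c_def}), now applied to the graph $\underline{G}_{\textbf{b}}$ (that is, with $\tau$ replaced by $\underline{\tau}(\textbf{b})$), this is exactly the assertion that the travel cost $c_{ij}(t,\textbf{b})$ computed on $\underline{G}_{\textbf{b}}$ equals the constant $\underline{c}_{ij}(\textbf{b})$ independently of $t$. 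Hence $\textbf{b}$ generates a constant travel cost for every arc $(i,j)\in A$ of $\underline{G}_{\textbf{b}}$, so $\underline{G}_{\textbf{b}}\in\mathcal{L}_C$, and applying Proposition \ref{prop_rank} to $\underline{G}_{\textbf{b}}$ immediately yields the thesis.

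The one point that needs care — and the main obstacle — is checking that the framework of $\mathcal{L}_C$ and of Proposition \ref{prop_rank} legitimately applies to $\underline{G}_{\textbf{b}}$, i.e. that the generated functions $\underline{\tau}_{ij}(\cdot,\textbf{b})$ are themselves admissible continuous piecewise linear FIFO travel time functions. I would address this by invoking the standard property of the IGP model, whose output is known to be FIFO, and by noting that (\ref{IGP}) is well defined precisely because $b(\mu)\geq\rho>0$: the integral is then strictly increasing in its upper limit, so $\underline{\tau}_{ij}(t,\textbf{b})$ is determined unambiguously and inherits the required regularity. Once this admissibility is recorded, the membership argument above goes through verbatim and no further computation is needed.
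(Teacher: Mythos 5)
Your proof is correct and follows essentially the same route as the paper: both arguments use the defining relation (\ref{IGP}) to conclude that the cost model (\ref{c_def}) evaluated on $\underline{\tau}(\textbf{b})$ with the step function $\textbf{b}$ itself yields the constant $\underline{c}_{ij}(\textbf{b})$, so that $\underline{G}_{\textbf{b}}$ is a yes-instance of the CTCP, and then invoke Proposition \ref{prop_rank}. Your additional remark on the FIFO/admissibility of the generated functions is a reasonable extra check that the paper leaves implicit.
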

\begin{proof}
Since $\textbf{b}$ is a step function, from (\ref{IGP}) it results that if the right-hand side of (\ref{c_def}) is evaluated w.r.t $\underline{\tau}_{ij}(t,\textbf{b})$, then we obtain the constant value $\underline{c}_{ij}(\textbf{b})$, for any time instant $t\geq 0$ and arc $(i,j)\in A$. This implies that the time-dependent graph $\underline{G}_{\textbf{b}}=(V,A,\underline{\tau}(\textbf{b}))$ is a yes instance of the \textit{\textcolor{black}{Constant Traversal Cost Problem}}, which proves the thesis. 
\end{proof}
The family of lower approximations $\underline{\tau}(\textbf{b})$ gives rise to a parameterized family of lower bounds $\underline{z}(\textbf{b})$, defined as follows. 
We denote with $\underline{z}(p_k,t,\textbf{b})$ the traversal time of a path $p_k$ at time instant $t$ on the time-dependent graph  $\underline{G}_{\textbf{b}}=(V,A,\underline{\tau}(\textbf{b}))$, that is 
\begin{equation}\label{z_tau_1}
\underline{z}(p_k,t, \textbf{b})=\underline{z}(p_{k-1},t, \textbf{b})+\underline{\tau}_{i_{k-1}i_k}(\underline{z}(p_{k-1},t), \textbf{b}),
\end{equation}
with the initialization $\underline{z}(p_0,t, \textbf{b})=0$.  
Since $\underline{G}_{\textbf{b}}$ is \textit{less congested} than $G$, the duration of a path is shorter on $\underline{G}_{\textbf{b}}$, that is $\underline{z}(p_k,t, \textbf{b})\leq z(p_k,t)$. Given a set $\mathcal{P}$ of paths defined on $(V,A)$, we compute the lower bound $\underline{z}(\textbf{b})$ as the duration of the quickest path of $\mathcal{P}$ on $\underline{G}_{\textbf{b}}$, that is:
$$\underline{z}(\textbf{b})=\min\limits_{p\in\mathcal{P}}\underline{z}(p,t,\textbf{b})\leq \min\limits_{p\in\mathcal{P}}z(p,t).$$ 
The main implication of  Proposition \ref{G_b_rank_inv} is that the lower bound $\underline{z}(\textbf{b})$ can be computed by solving a time-invariant routing problem.  
 In particular,  we have that:
$$\underline{z}(\textbf{b})=\underline{z}(\underline{p}^*_{\textbf{b}},t,\textbf{b}),$$
where
$$\underline{p}^*_{\textbf{b}}=\arg\min\limits_{p\in\mathcal{P}}\sum\limits_{(i,j)\in p}\underline{c}_{ij}(\textbf{b}).$$\\
In determining $\underline{p}^*_{\textbf{b}}$ we exploit that a cost function \textit{valid} for $\underline{G}_{\textbf{b}}$ is $d(i,j)=\underline{c}_{ij}$, with $(i,j)\in A$. Indeed, since $\underline{\tau}_{ij}(t,\textbf{b})$ is constant in the long run, we have that $\underline{\tau}_{ij}(T,\textbf{b})=\underline{c}_{ij}/b_{\mathbf{|\mathcal{T}_b|-1}}$. \\
\indent In order to find the best (larger) lower bound, the following problem has to be solved:
\begin{equation}\label{best_lb}
\max\limits_{\textbf{b}}\underline{z}(\textbf{b}).
\end{equation}
Unfortunately, this problem is nonlinear, nonconvex and non differentiable. So it is quite unlikely to determine its optimal solution with a moderate computational effort.
Instead, we aim to find a good lower bound as follows. 
We  observe that the tightness of the lower bound $\underline{z}(\textbf{b})$ clearly depends on the \textit{maximum fitting deviation} between the original travel time function $\tau$ and its lower approximation $\underline{\tau}(\textbf{b})$. It is worth noting that we generate $\underline{\tau}_{ij}(\textbf{b})$ by approximating  travel cost function $c_{ij}(t,\textbf{b})$ to its minimum value $\underline{c}_{ij}(\textbf{b})$, with $(i,j)\in A$.
The maximum fitting deviation of such travel cost approximation is the maximum range value $\gamma(\textbf{b})$ defined in Section \ref{char_func}.  If $\gamma(\textbf{b})$ is equal to zero, then both travel time function $\underline{\tau}(\textbf{b})$ and each travel cost $\underline{c}_{ij}(\textbf{b})$ are \textit {perfect fit}, with $(i,j)\in A$. In this case, the original graph $G$ is path ranking invariant and $\underline{z}(\textbf{b})$ is the best (larger) lower bound. Otherwise the value of $\gamma({\textbf{b}})$ represents a measurement of the \textit{distance} from this special case.\\ 
\indent For these reasons we heuristically solve (\ref{best_lb}), by determining the optimal step function $\textbf{y}^*$ of  the linear program (\ref{obj})-(\ref{c7}), which represents the step function with the min-max value of travel cost range.
In particular\textcolor{black}{,} we observe that the coefficient $\underline{c}_{ij}(\textbf{y}^*)$ are also determined by the optimal solution of such linear program. Indeed the optimal piecewise linear function $x^*_{ij}(t)$ corresponds to the travel cost function $c_{ij}(t,\textbf{y}^*)$, and, therefore, its minimum value corresponds the optimal value of the decision variable $\underline{x}_{ij}$, namely $\underline{x}^*_{ij}$ with $(i,j)\in A$.
Summing up the proposed lower bounding procedure consists of  three main steps. 
\begin{itemize}
\item \textbf{STEP 1}. Solve the linear program (\ref{obj})-(\ref{c7}). Set the step function $\textbf{b}$ equal to $\textbf{y}^*$. Similarly we set the coefficient $\underline{c}_{ij}$ to  $\underline{x}^*_{ij}$ for each $(i,j)\in A$. 
\item \textbf{STEP 2}. Determine $\underline{p}_{\textbf{b}}^*$ as the least cost solution of the following time-independent routing problem:
$$\min\limits_{p\in \mathcal{P}}\sum\limits_{(i,j)\in p}\underline{c}_{ij}(\textbf{b})$$
\item  \textbf{STEP 3}. Compute the lower bound $\underline{z}(\textbf{b})$ by evaluating $\underline{p}_{\textbf{b}}^*$ w.r.t. $\underline{\tau}(\textbf{b})$, that is:
$$\underline{z}(\textbf{b})=\underline{z}(\underline{p}_{\textbf{b}}^*,t_0,\textbf{b})$$
\end{itemize}

It is worth noting that due to a huge number of travel time breakpoints, the linear program (\ref{obj})-(\ref{c7}) might not be solved in a reasonable amount of time (e.g., in realistic time-dependent graph). In this cases we update the first step of the lower bounding procedure as follows. We solve a smaller instance of the linear program, where  each $\mathbf{\Omega}_{ij}$ is set equal to a given (unique) set of time instant $\mathcal{B}$. 

It results that each optimal continuous piecewise linear function $x^*_{ij}(t)$ is a surrogate function of the travel cost  $c_{ij}(t,\textbf{y}^*)$. This implies that during the first step we compute the coefficient $\underline{c}_{ij}(\textbf{y}^*)$ by enumerating all breakpoints  of $c_{ij}(t,\textbf{y}^*)$, $(i,j)\in A$.\\
\indent We finally observe that since the path $\underline{p}^*_{\textbf{b}}$ belongs to the set of feasible paths $\mathcal{P}$,  we also generate a parameterized  family of upper bound $\overline{z}(\textbf{b})$ obtained by evaluating $\underline{p}^*_{\textbf{b}}$ w.r.t. the original travel time function $\tau$ :
$$\overline{z}(\textbf{b}):=z(\underline{p}^*_{\textbf{b}},t_0).$$

\section{Computational Results}\label{sec_5}
\indent We have tested our lower bounding  procedure on two well-known  vehicle routing problems: the \textit{Time Dependent Travelling Salesman Problem}  and the \textit{Time Dependent Rural Postman Problem}. \\
 The state-of-the-art algorithms for such time-dependent routing problems are, to the best of our knowledge, \cite{ADAMO2020104795}(TDTSP) and \cite{calogiuri2019branch} ($TDRPP$). Both  contributions \textcolor{black}{gave rise to} lower bounding procedures which \textcolor{black}{were} used in an enumerative scheme. It is worth noting that \textit{the idea of a lower bound based on a less congested graph} $\underline{G}=(V,A,\underline{\tau})$ is also the idea underlying the lower bounds proposed by \cite{ADAMO2020104795} and \cite{calogiuri2019branch}. 
Figure \ref{Figure_2} reports an example concerning a time-dependent travel time function $\tau_{ij}(t)$ associated with an arc of a time-dependent graph $G$, consisting of both time-invariant arcs and time-dependent arcs. The Figure also shows the $\underline{\tau}_{ij}(t)$ obtained by our approach along with the lower approximations of $\tau_{ij}(t)$ obtained by applying the approach proposed, respectively, by \cite{calogiuri2019branch} and \cite{ADAMO2020104795}. 
In particular, \cite{calogiuri2019branch} assumed  that the \textit{original} traversal time $\tau_{ij}(t)$ was generated from the model by \cite{Ichoua2003379} as the time needed to traverse an arc $(i,j)\in A$ of length $L_{ij}$ at step-wise speed $v_{ij}(t)$, when the vehicle leaves the vertex $i$ at time $t$. Then \cite{calogiuri2019branch} exploited the speed decomposition proposed by \cite{Cordeau2014}, where given a partition of the planning horizon in $H$ time periods,  the speed value $v_{ijh}$ of arc $(i,j)\in A$ during period $h$ is express as:
$$v_{ijh}=\delta_{ijh}f_hu_{ij},$$
where 
\begin{itemize}
\item $u_{ij}$ is the maximum travel speed across arc $(i,j)\in A$, i.e., $u_{ij}=\max\limits_{h=0,\dots,H-1}v_{ ijh}$;
\item $f_h$ belongs to$ ]0,1]$ and is the best (i.e., lightest) congestion factor during $h-th$ interval, i.e., $f_h =\max\limits_{(i,j)\in A} v_{ijh} /u_{ij}$ ;
\item $\delta_{ijh}$ belongs to $]0,1]$ and represents the degradation of the congestion factor of arc $(i,j)$ in the $h-th$ interval with respect to the less-congested arc in the same period.
\end{itemize}
In  \cite{calogiuri2019branch}   the lower approximation $\underline{\tau}$ was defined as the traversal time of the vehicle evaluated w.r.t.  the most favourable congestion factor during each interval $h$-th, i.e., $\underline{v}_{ijh}.\leftarrow f_hu_{ij}$. The main drawback of this approach is that, if a subset of arcs are time-invariant, then traffic congestion factors $f_h$ are all equal to 1. In these cases, the lower approximation $\underline{\tau}$ determined by  \cite{calogiuri2019branch} are constant and equal to $L_{ij}/u_{ij}$ (see Figure \ref{Figure_2}). \cite{ADAMO2020104795} enhanced the speed decomposition of \cite{Cordeau2014}, by determining its parameters through a fitting procedure of the traffic data. As shown in the example reported in Figure \ref{Figure_2}, by applying our approach, we get  a lower approximation $\underline{\tau}$ that fits the original $\tau$ better than the travel time approximation determined by  \cite{ADAMO2020104795}. 
Indeed, the fitting procedure by \cite{ADAMO2020104795} aims to minimize the deviation between the (original) speed values $v_{ijh}$ and the most favourable speed value $\underline{v}_{ijh}$, during some (not necessarily consecutive) periods. As shown in Figure \ref{Figure_2} this approach do not assure that there exists at least one time instant closing the fitting deviation between the lower approximation and the original travel time $\tau$, which, on the contrary, it is guaranteed by our procedure.\\
As discussed in the following subsections, computational results show that tighter approximation travel times implies tighter lower bounds for both TDTSP instances and TDRPP instances. In particular, when embedded into a branch-and-bound procedure, the proposed lower bounding mechanism allows to solve to optimality a larger number of instances than the state-of-the-art algorithms.
\begin{figure}[H] 
\begin{center}
 \includegraphics[width=110mm,  angle =-90]{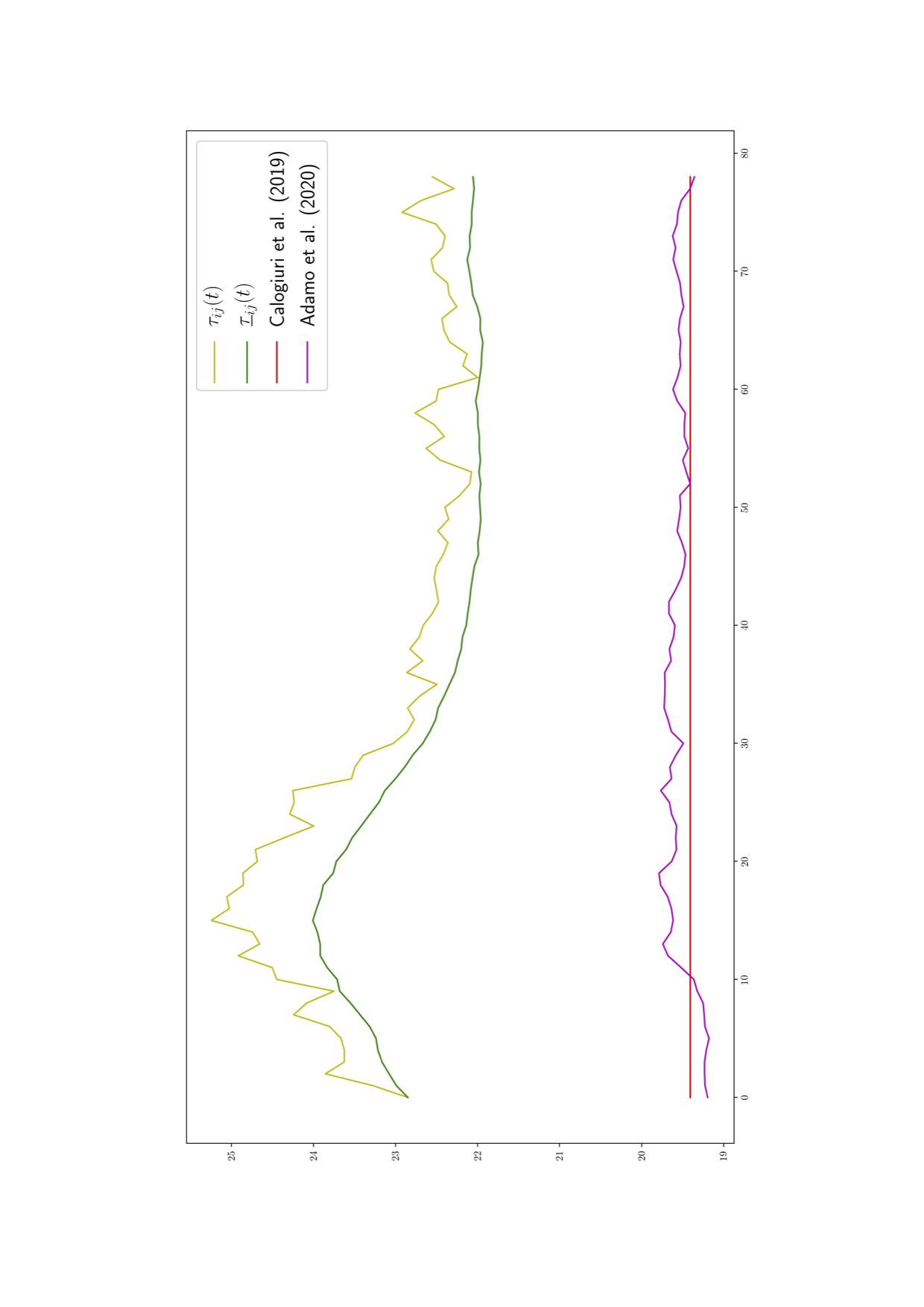}
\caption{Comparing the $\underline{\tau}$ functions determined by, respectively, our lower bounding procedure,  \cite{Cordeau2014},  \cite{ADAMO2020104795} } \label{Figure_2}
\end{center}
\end{figure} 
 
\subsection {Results on the TDTSP}
The  procedure illustrated in Section \ref{sec_4} can be used to determine a "good" lower bound for $TDTSP$ as follows. We run the first step of the proposed lower bounding procedure and solve the linear program  (\ref{obj})-(\ref{c7}).  Then we run the second and third steps of the lower bounding procedure. In particular the solution $\underline{p}^*_{\textbf{b}}$ corresponds to a first hamiltonian circuit determined by solving an Asymmetric TSP with arc costs equal to $d_{ij}=\min\limits_{t\geq 0}x^*_{ij}(t).$ The  lower bound is embedded into the branch-and-bound scheme presented by \cite{arigliano2018branch} that we sketch here.
First we initialize the subproblem queue $Q$ with the original problem $P$ and set the upper bound  $UB=z(\underline{p}^*_{\textbf{y}^*},t_0)$. The instances of the Asymmetric TSP are solved by means of \cite{carpaneto1980some}.
At a generic step, let $\tilde{P}$ be the subproblem extracted from $Q$ and characterized by a set of branching constraints corresponding to a \textit{fixed} path $p_k=(0=v_0, v_1, v_2,\dots$ $v_{k})$, and possibly some forbidden arcs. Then we compute its lower bound, named $LB_1$. In particular we run the proposed procedure by omitting the first initialization step: the step function $\textbf{y}^*$ and the functions $x^*_{ij}(t)$ are inherited by the parent problem, whilst the coefficient $\underline{c}_{ij}$ is updated as follows:  
$$\underline{c}_{ij}=\min\limits_{t\in[z(p_k,t_0),UB]}x^*_{ij}(t),$$
with $(i,j)\in A$. Hence, we examine the (integer) optimal solution of the $\underline{\tilde{P}}$ relaxation:
\begin{equation}\label {LB1sol}
(0=v_0, v_1, \dots, v_{k}, v_{k+1}, \dots, v_{k^{\prime}}, v_{k^{\prime}+1}, \dots, v_{n+1}=0), 
\end{equation} 
which also gives an upper bound $UB_1$ on the optimal solution of $\tilde{P}$, obtained by evaluating (\ref{LB1sol}) w.r.t. the original travel time function $\tau$. If  $UB_1<UB$,  then we determine the upper bound as $UB=UB_1$. If $UB_1=LB_1$, then this solution is optimal for problem $\tilde{P}$. Otherwise, if $LB_1<UB$, the procedure branches as follows. Let $k^{\prime} \in \{k+1,\dots,n\}$ be the index such that its traversal time evaluated w.r.t. $\underline{\tau}$ is equal to the original traversal time, i.e. $\underline{z}(p_{k'},t_0)=z(p_{k'},t_0)$. 
We create $k'-k+1$ subproblems as follows:
\begin{equation}
\psi_{v_{k},v_{k+1}}=1, \psi_{v_{k+1},v_{k+2}}=1, \psi_{v_{k+2},v_{k+3}}=1, \dots, \psi_{v_{k'},v_{k'+1}}=1;
\end{equation}
\begin{equation} \nonumber
\dots
\end{equation}
\begin{equation}
\psi_{v_{k},v_{k+1}}=1, \psi_{v_{k+1},v_{k+2}}=1, \psi_{v_{k+2},v_{k+3}}=0; 
\end{equation}
\begin{equation}
\psi_{v_{k},v_{k+1}}=1, \psi_{v_{k+1},v_{k+2}}=0;
\end{equation}
\begin{equation}
\psi_{v_{k},v_{k+1}}=0,
\end{equation}
where $\psi_{ij}$ is a binary variable equal to 1 if and only if arc $(i,j) \in A$ is in the solution, while the arcs having $\psi_{ij}$ equal to 0 are forbidden. \\

\indent  We compared the obtained results with the results obtained in \cite{ADAMO2020104795}. Both procedures are implemented in C++ and run on MacBook Pro with a 2.33-GHz Intel Core 2 Duo processor and 4 GB of memory. Linear programs are solved with Cplex 12.9. 
We consider the same instances generated in \cite{ADAMO2020104795}  and we impose a time limit of 3600 s. Two scenarios are given: a first traffic pattern A in which a limited traffic zone is located in the center; a second traffic pattern B in which a heaviest traffic congestion is situated in the center. In  \cite{ADAMO2020104795}  a fundamental role was played by $\Delta$ which represents the worst degradation $\delta_{ijh}$ of the congestion factor of any arc $(i, j) \in A$ over the entire planning horizon, i.e. $\Delta=\min\limits_{h=0,\dots,H-1}(\min\limits_{(i,j)\in A}\delta_{ijh})$. For \cite{ADAMO2020104795} hard instances are characterized by low value of $\Delta$ in traffic pattern B. The results for the two scenarios are shown in Tables \ref{comp-resA} and \ref{comp-resB} in which 30 instances are generated for each combination of $|V | = 15, 20, 25, 30, 35, 40, 45, 50$ and $\Delta= 0.90, 0.80, 0.70$. The headings are as follows:
\begin{itemize}
\item $OPT$: number of instances solved to optimality out of 30;
\item $UB_I/LB_F$: average ratio of the initial upper bound value $UB_I$ the best lower bound $LB_F$ available at the end of the search;
\item $GAP_I$ : average initial optimality gap $\frac{UB_I-LB_I}{LB_I}\%$;
\item $GAP_F$ : average final optimality gap $\frac{UB_F-LB_F}{LB_F}\%$;
\item $NODES$: average number of nodes;
\item $TIME$: average computing time in seconds.
\end{itemize}
Except for columns $OPT$, we report results on two distinct rows: the first row is the average across instances solved to optimality, and the second row is the average for the remaining instances. For the sake of conciseness, the first or the second row has been omitted whenever none or all instances are solved to optimality. For columns  $NODES$ and $TIME$ we report only averages for instances that are solved to optimality.
Computational results show that our algorithm is capable of solving \textcolor{black}{923} instances out of \textcolor{black}{1440} instances while the procedure by \cite{ADAMO2020104795} solved only \textcolor{black}{566} problems. The improvement is remarkable for hard instances characterized by the traffic pattern B. Here, the \cite{ADAMO2020104795}  algorithm solves \textcolor{black}{229} out of \textcolor{black}{720} instances while our algorithm procedure succeeds on \textcolor{black}{505} instances. This can be explained by the quality of the new lower bound as well as by the low computational effort spent to solve the linear program (\ref{obj})-(\ref{c7}).  In particular,  we formulate a reduced-size instance of such linear model by including in the set $\mathcal{B}$ only 75 time instants. \textcolor{black}{Moreover we set the value of $\rho$ equal to $1/\min\limits_{h=0,\dots,|\mathcal{B}|-1}(t_{h+1}-t_h)$}. We finally observe that  the average computational time  is \textcolor{black}{438} seconds  for our procedure and \textcolor{black}{710} seconds for \cite{ADAMO2020104795}. Indeed, in spite of a higher initial gap, the new lower bound reduces the overall number of visited branch-and-bound nodes:  on average our procedure processes \textcolor{black}{413} nodes (i.e. ATSP instances) less than  \cite{ADAMO2020104795}.
\begin{landscape}

\begin{table}[H]
\tiny
\caption{Pattern A}
\label{comp-resA}
\begin{tabular}{|c|c|c|c|c|c|c|c|c|c|c|c|c|c|}
\cline{1-14}
\multicolumn{2}{|c|}{}         &  \multicolumn{6}{c|}{\cite{arigliano2018branch} branch-and-bound with   \cite{ADAMO2020104795} LB }                                                                       & \multicolumn{6}{c|}{\cite{arigliano2018branch} branch-and-bound with the new LB}                                                                                                          \\ \cline{1-14}
$\Delta$ & $|V|$               & $OPT$               & $UB_I/LB_F$ & $GAP_{I}$ & $GAP_{F}$ & $NODES$ & $TIME$  & $OPT$ & $UB_I/LB_F$ & $GAP_{I}$ & $GAP_{F}$ & $NODES$ & $TIME$  \\ \cline{1-14}
\multirow{16}{*}{0.70}      & \multirow{2}{*}{15} & \multirow{2}{*}{28}   & 1.014       & 7.218     & 0.000     & 2137    & 296.89  & \multirow{2}{*}{30}    &                                                               1.014       & 7.034     & 0.000     & 117   & 29.98   \\
&                     &                       & 1.002       & 7.523     & 7.343     & 42088   & --      &       &      --       &       --       &  --         &     --    &  --       \\ \cline{2-14}
         & \multirow{2}{*}{20} & \multirow{2}{*}{21}   & 1.007       & 4.998     & 0.000     & 3784    & 840.85  & \multirow{2}{*}{27}    &                                                               1.009       & 5.662     & 0.000     & 494   & 145.50  \\
         &                     &                       & 1.010       & 9.782     & 7.425     & 20209   & --      &     &                                                               1.002       & 7.605     & 7.542     & 13574 & -- \\ \cline{2-14}
         & \multirow{2}{*}{25} & \multirow{2}{*}{10}   & 1.012       & 4.083     & 0.000     & 4494    & 1537.26 & \multirow{2}{*}{21}    &                                                               1.010       & 5.141     & 0.000     & 2387  & 872.45  \\
         &                     &                       & 1.009       & 6.412     & 5.135     & 14500   & --      &      &                                                               1.005       & 6.877     & 6.490     & 10339 & -- \\ \cline{2-14}
         & \multirow{2}{*}{30} & \multirow{2}{*}{7}    & 1.011       & 2.821     & 0.000     & 3648    & 1865.11 & \multirow{2}{*}{10}    &                                                              1.010       & 4.696     & 0.000     & 2067  & 1522.54 \\
         &                     &                       & 1.005       & 5.812     & 5.129     & 14979   & --      &      &                                                                1.008       & 5.379     & 4.625     & 6081  & -- \\ \cline{2-14}
         & \multirow{2}{*}{35} & \multirow{2}{*}{3}    & 1.004       & 1.897     & 0.000     & 672     & 1120.41 & \multirow{2}{*}{2}     &                                                              1.009       & 3.371     & 0.000     & 330   & 1451.72 \\
         &                     &                       & 1.007       & 4.410     & 3.749     & 11223   & --      &      &                                                               1.006       & 5.511     & 4.989     & 2794   & -- \\ \cline{2-14}
         & \multirow{2}{*}{40} & \multirow{2}{*}{3}    & 1.013       & 2.494     & 0.000     & 3460    & 659.11  &   \multirow{2}{*}{3}    &                                                                       1.008      &      5.319     &    0.000       &       307  &     1701.20    \\
         &                     &                       & 1.007       & 5.468     & 4.746     & 14572   & --      &      &                                                                1.003       & 5.969     & 5.712     & 4981   & -- \\ \cline{2-14}
         & \multirow{2}{*}{45} & \multirow{2}{*}{7}    & 1.014       & 1.984     & 0.000     & 1413    & 977.83  &   \multirow{2}{*}{7}    &                                                                         1.006    &     1.379     &     0      &    321     &  992.62       \\
         &                     &                       & 1.006       & 4.942     & 4.347     & 12117   & --      &     &                                                                1.002       & 5.646    & 5.505     & 6568   & -- \\ \cline{2-14}
         & \multirow{2}{*}{50} & \multirow{2}{*}{3}    & 1.001       & 0.686     & 0.000     & 64      & 214.06  &  \multirow{2}{*}{3}     &                                                                       1.004      &    2.906       &    0.000       &  153       &  604.27       \\
         &                     &                       & 1.003       & 5.842     & 5.576     & 22682   & --      &       &                                                                      1.003      &    7.092       & 6.851          &    11408     &       --  \\ \cline{1-14}
\multirow{16}{*}{0.80}      & \multirow{2}{*}{15} & \multirow{2}{*}{30}    & 1.007       & 4.528     & 0.000     & 1829    & 189.04  &  \multirow{2}{*}{30}     &                                                                          1.007   &   4.259        &     0.000      &   172      &      4.54   \\
         &                     &                       & --          & --        & --        & --      & --      &       &                                                                       --      & --          &     --      &   --      &   --      \\ \cline{2-14}
         & \multirow{2}{*}{20} & \multirow{2}{*}{27}   & 1.003       & 3.500     & 0.000     & 1654    & 559.23  &  \multirow{2}{*}{30}     &                                                                        1.004     &   3.491        &   0.000        &      608   &      163.98   \\
         &                     &                       & 1.002       & 5.962     & 5.768     & 23021   & --      &       &                                                                        --     &    --       &   --        &    --     &   --      \\ \cline{2-14}
         & \multirow{2}{*}{25} & \multirow{2}{*}{14}   & 1.006       & 3.282     & 0.000     & 3036    & 1072.43 &   \multirow{2}{*}{26}    &                                                                        1.006     &    3.431       &    0.000       &   997     &  367.12       \\
         &                     &                       & 1.005       & 3.813     & 3.543     & 11226   & --      &       &                                                                         1.003    &      4.639     &      4.432     &    10523     &      --   \\ \cline{2-14}
         & \multirow{2}{*}{30} & \multirow{2}{*}{9}    & 1.004       & 2.024     & 0.000     & 1689    & 1584.40 &   \multirow{2}{*}{21}    &                                                                      1.006       &    2.769       &     0.000      &    1439     &     846.80    \\
         &                     &                       & 1.002       & 3.903     & 3.730     & 12568   & --      &       &                                                                       1.003      &     3.833      &  3.666         &    6017     &        -- \\ \cline{2-14}
         & \multirow{2}{*}{35} & \multirow{2}{*}{3}    & 1.002       & 1.497     & 0.000     & 654     & 978.10  &  \multirow{2}{*}{9}     &                                                                           1.006  &     3.181      &    0.000       &  416       &   1682.50      \\
         &                     &                       & 1.005       & 2.868     & 2.542     & 6610    & --      &       &                                                                       1.003      &    3.288       &   2.962        &    7591     &      --   \\ \cline{2-14}
         & \multirow{2}{*}{40} & \multirow{2}{*}{7}    & 1.005       & 1.809     & 0.000     & 2822    & 1188.50 &    \multirow{2}{*}{8}   &                                                                       1.001      &    3.028       &   0.000        & 318        &        1752.77 \\
         &                     &                       & 1.005       & 3.981     & 3.542     & 11260   & --      &       &                                                                    1.002         &   3.700        & 3.530          &   8958      &      --   \\ \cline{2-14}
         & \multirow{2}{*}{45} & \multirow{2}{*}{7}    & 1.006       & 1.397     & 0.000     & 631     & 990.52  &   \multirow{2}{*}{8}    &                                                                      1.003       &    3.182       &   0.000        &  403       &        515.17 \\
         &                     &                       & 1.003       & 3.435     & 3.157     & 9838    & --      &       &                                                                    1.002         &     3.493      &    3.260       &    6047     &        -- \\ \cline{2-14}
         & \multirow{2}{*}{50} & \multirow{2}{*}{4}    & 1.002       & 0.830     & 0.000     & 380     & 945.71  &  \multirow{2}{*}{4}     &                                                                      1.004      &   3.087        &   0.000        &   226      &         576.84 \\
         &                     &                       & 1.001       & 3.976     & 3.873     & 16293   & --      &       &                                                                      1.001       &   4.311        &   4.184        &  914       &         -- \\ \cline{1-14}
\multirow{14}{*}{0.90}      & 15                     &  30      & 1.002       & 2.218     & 0.000     & 282     & 29.86   &  30     &                                                                      1.003       &     1.842      &   0.000        &    24     & 0.88        \\ \cline{2-14}
         & 20                  & 30                    & 1.001       & 1.883     & 0.000     & 642     & 164.83  &  30     &                                                                          1.003   &      1.603     &   0.000        &  205       &      6.20   \\ \cline{2-14}
         & \multirow{2}{*}{25} & \multirow{2}{*}{27}   & 1.002       & 1.833     & 0.000     & 1617    & 771.20  & \multirow{2}{*}{30}     &                                                                            1.003 &       1.662    & 0.000          &  1329       &       58.79\\
         &                     &                       & 1.004       & 2.671     & 2.295     & 7931    & --      &         &                                                                           --  &     --      &     --      &   --      &  --        \\ \cline{2-14}
         & \multirow{2}{*}{30} & \multirow{2}{*}{23}   & 1.002       & 1.563     & 0.000     & 2109    & 1193.32 &  \multirow{2}{*}{30}     &                                                                           1.002  &  1.323         &  0.000         &  500       &     385.08    \\
         &                     &                       & 1.002       & 1.964     & 1.611     & 4466    & --      &       &                                                                          --   &      --     &    --       &   --      &   --      \\ \cline{2-14}
         & \multirow{2}{*}{35} & \multirow{2}{*}{16}   & 1.001       & 1.272     & 0.000     & 2238    & 1913.16 &   \multirow{2}{*}{23}     &                                                                       1.003      &   1.346        &     0.000      &   1691      &     843.76    \\
         &                     &                       & 1.004       & 1.579     & 1.302     & 7903    & --      &       &                                                                         1.000    &   1.609        &  1.598         &   6580      &      --   \\ \cline{2-14}
         & \multirow{2}{*}{40} & \multirow{2}{*}{11}   & 1.002       & 1.201     & 0.000     & 1999    & 1396.01 &   \multirow{2}{*}{13}     &                                                                        1.002     &     1.300      & 0.000          & 1941        &     765.65    \\
         &                     &                       & 1.001       & 1.955     & 1.766     & 8641    & --      &       &                                                                        1.001     &      1.734     &    1.621       &  6429       &    --     \\ \cline{2-14}
         & \multirow{2}{*}{45} & \multirow{2}{*}{11}   & 1.002       & 1.043     & 0.000     & 808     & 971.56  &    \multirow{2}{*}{14}    &                                                                          1.002   &      1.476     &   0.000        & 2800        &      864.54   \\
         &                     &                       & 1.001       & 1.732     & 1.570     & 8021    & --      &       &                                                                      1.002      &    1.717       &    1.499       & 5247        &      --   \\ \cline{2-14}
         & \multirow{2}{*}{50} & \multirow{2}{*}{6}    & 1.003       & 0.927     & 0.000     & 1377    & 733.67  &    \multirow{2}{*}{9}    &                                                                       1.002      &    1.675       &   0.000        &  2980       &  1623.40       \\
         &                     &                       & 1.000       & 2.029     & 1.877     & 7937    & --      &       &                                                                      1.001       &      2.267     &  2.109         &   7594      &       --  \\ \cline{1-14}
AVG      &                     & 337                   & 1.005       & 2.910     & 0.000     & 1816    & 754.91  &   418     &             1.005                                                                &        3.127   &    0.000       &   894      &   461.05      \\ \cline{1-14}
\end{tabular}
\end{table}
\end{landscape}

\begin{landscape}
\begin{table}[H]
\tiny
\caption{Pattern B}
\label{comp-resB}
\begin{tabular}{|c|c|c|c|c|c|c|c|c|c|c|c|c|c|}
\hline
\multicolumn{2}{|c|}{}                       & \multicolumn{6}{c|}{\cite{arigliano2018branch} branch-and-bound with   \cite{ADAMO2020104795} LB}                                                                         & \multicolumn{6}{c|}{\cite{arigliano2018branch} branch-and-bound with the new LB}                                                                                                          \\ \hline
$\Delta$               & $|V|$               & $OPT$               & $UB_I/LB_F$ & $GAP_{I}$ & $GAP_{F}$ & $NODES$ & $TIME$ & $OPT$ & $UB_I/LB_F$ & $GAP_{I}$ & $GAP_{F}$ & $NODES$ & $TIME$  \\ \hline
\multirow{20}{*}{0.70} & \multirow{2}{*}{15} & \multirow{2}{*}{28} & 1.011       & 5.550     & 0.000     & 5074    & 643.15 &\multirow{2}{*}{30}&1.010&9.158&0.000&441&11.72 \\
                       &                     &                     & 1.000       & 21.579    & 9.327     & 66059   & --     &&--&--&--&--&-- \\ \cline{2-14}
                       & \multirow{2}{*}{20} & \multirow{2}{*}{14} & 1.007       & 4.508     & 0.000     & 6362    & 1348.98&\multirow{2}{*}{30}&1.008&6.844&0.000&1199&43.03 \\
                       &                     &                     & 1.004       & 15.856    & 5.468     & 51346   & --     &&--&--&--&--&-- \\ \cline{2-14}
                       & \multirow{2}{*}{25} & \multirow{2}{*}{4}  & 1.016       & 3.243     & 0.000     & 847     & 526.58 &\multirow{2}{*}{29}&1.011&5.122&0.000&5523&456.17 \\
                       &                     &                     & 1.007       & 11.152    & 4.581     & 35810   & --     &&1.000&6.114&5.989&49813&-- \\ \cline{2-14}
                       & \multirow{2}{*}{30} & \multirow{2}{*}{2}  & 1.003       & 0.317     & 0.000     & 48      & 14.03  &\multirow{2}{*}{23}&1.012&4.673&0.000&3582&715.48 \\
                       &                     &                     & 1.007       & 9.454     & 4.688     & 41335   & --     &&1.014&6.081&4.705&24953&-- \\ \cline{2-14}
                       & \multirow{2}{*}{35} & \multirow{2}{*}{2}  & 1.000       & 0.000     & 0.000     & 7       & 14.18  &\multirow{2}{*}{10}&1.016&5.593&0.000&27587&2037.09 \\
                       &                     &                     & 1.008       & 7.583     & 5.083     & 39228   & --     &&1.011&4.899&3.874&33274&-- \\ \cline{2-14}
                       & \multirow{2}{*}{40} & \multirow{2}{*}{0}  & --          & --        & --        & --      & --     &\multirow{2}{*}{3}&1.003&3.684&0.000&31660&2182.29 \\
                       &                     &                     & 1.002       & 7.910     & 4.856     & 28032   & --     &&1.005&5.139&4.771&55949&-- \\ \cline{2-14}
                       & \multirow{2}{*}{45} & \multirow{2}{*}{1}  & 1.003       & 0.343     & 0.000     & 283     & 380.00 &\multirow{2}{*}{2}&1.012&4.083&0.000&11000&2439.76 \\
                       &                     &                     & 1.002       & 8.806     & 5.793     & 15943   & --     &&1.005&6.060&5.811&65674&-- \\ \cline{2-14}
                       & \multirow{2}{*}{50} & \multirow{2}{*}{2}  & 1.000       & 0.000     & 0.000     & 0       & 5.16   &\multirow{2}{*}{2}&1.013&2.892&0.000&1532&911.54 \\
                       &                     &                     & 1.002       & 7.581     & 5.492     & 14310   & --     &&1.006&6.707&6.274&8871&-- \\ \cline{2-14}
 \hline\hline
\multirow{20}{*}{0.80} & \multirow{2}{*}{15} & \multirow{2}{*}{30} & 1.007       & 4.145     & 0.000     & 4172    & 287.60       &\multirow{2}{*}{30}&1.009&6.435&0.000&167&4.59\\
                       &                     &                     &         --    &     --      &      --     &    --     &  --  &&--&--&--&--&--     \\ \cline{2-14}
                       & \multirow{2}{*}{20} & \multirow{2}{*}{19} & 1.008       & 3.950     & 0.000     & 6232    & 1033.32      &\multirow{2}{*}{30}&1.007&4.936&0.000&1016&33.55\\
                       &                     &                     & 1.001       & 3.836     & 3.755     & 55019   & --           &&--&--&--&--&-- \\ \cline{2-14}
                       & \multirow{2}{*}{25} & \multirow{2}{*}{7}  & 1.007       & 2.446     & 0.000     & 6253    & 1265.33      &\multirow{2}{*}{30}&1.006&3.498&0.000&4619&268.10\\
                       &                     &                     & 1.004       & 5.920     & 3.212     & 28420   & --           &&--&--&--&--&--\\ \cline{2-14}
                       & \multirow{2}{*}{30} & \multirow{2}{*}{4}  & 1.001       & 1.264     & 0.000     & 867     & 763.58       &\multirow{2}{*}{26}&1.008&3.141&0.000&1438&374.71\\
                       &                     &                     & 1.003       & 4.905     & 3.202     & 20060   & --           &&1.007&3.603&2.889&35382&--\\ \cline{2-14}
                       & \multirow{2}{*}{35} & \multirow{2}{*}{2}  & 1.000       & 0.000     & 0.000     & 10      & 13.43        &\multirow{2}{*}{21}&1.008&2.987&0.000&1012&997.28 \\
                       &                     &                     & 1.002       & 4.517     & 3.696     & 21260   & --           &&1.004&3.290&2.964&23654&-- \\ \cline{2-14}
                       & \multirow{2}{*}{40} & \multirow{2}{*}{0}  & --          & --        & --        & --      & --           &\multirow{2}{*}{13}&1.007&2.704&0.000&1823&1278.25 \\
                       &                     &                     & 1.002       & 5.136     & 3.633     & 17467   &              &&1.003&3.071&2.800&41341&--  \\ \cline{2-14}
                       & \multirow{2}{*}{45} & \multirow{2}{*}{1}  & 1.003       & 0.339     & 0.000     & 186     & 187.29       &\multirow{2}{*}{9}&1.019&3.759&0.000&7764&1047.82 \\
                       &                     &                     & 1.001       & 5.740     & 4.074     & 14109   & --           &&1.005&3.475&3.085&26611&-- \\ \cline{2-14}
                       & \multirow{2}{*}{50} & \multirow{2}{*}{2}  & 1.000       & 0.000     & 0.000     & 0       & 5.59         &\multirow{2}{*}{7}&1.011&2.803&0.000&5906&1316.80 \\
                       &                     &                     & 1.001       & 4.792     & 3.742     & 9640    & --           &&1.006&4.003&3.450&44414&--  \\ \cline{2-14}
 \hline\hline
\multirow{19}{*}{0.90} & 15                  & 30                  & 1.003       & 1.898     & 0.000     & 358     & 29.58   &30&1.004&3.099&0.000&306&1.05 \\ \cline{2-14}
                       & \multirow{2}{*}{20} & \multirow{2}{*}{30} & 1.003       & 1.904     & 0.000     & 3271    & 351.53  &\multirow{2}{*}{30}&1.003&2.526&0.000&103&3.93\\
                       &                     &                     & --          & --        & --        & --      & --      &&--&--&--&--&--\\ \cline{2-14}
                       & \multirow{2}{*}{25} & \multirow{2}{*}{22} & 1.002       & 1.359     & 0.000     & 4406    & 1108.16 &\multirow{2}{*}{30}&1.003&1.870&0.000&400&28.19\\
                       &                     &                     & 1.002       & 2.698     & 1.857     & 21989   & --      &&--&--&--&--&--\\ \cline{2-14}
                       & \multirow{2}{*}{30} & \multirow{2}{*}{15} & 1.003       & 1.345     & 0.000     & 3668    & 1660.69 &\multirow{2}{*}{30}&1.003&1.587&0.000&793&174.58\\
                       &                     &                     & 1.001       & 2.077     & 1.531     & 22643   & --      &&--&--&--&--&--\\ \cline{2-14}
                       & \multirow{2}{*}{35} & \multirow{2}{*}{9}  & 1.002       & 0.878     & 0.000     & 1608    & 1440.52 &\multirow{2}{*}{28}&1.003&1.404&0.000&3684&402.04\\
                       &                     &                     & 1.000       & 2.091     & 1.633     & 15296   & --      &&1.001&1.802&1.801&16242&--\\ \cline{2-14}
                       & \multirow{2}{*}{40} & \multirow{2}{*}{1}  & 1.000       & 1.680     & 0.000     & 1768    & 2473.61 &\multirow{2}{*}{25}&1.002&1.276&0.000&6130&633.34\\
                       &                     &                     & 1.001       & 2.148     & 1.565     & 12088   & --      &&1.000&1.792&1.783&27452&--\\ \cline{2-14}
                       & \multirow{2}{*}{45} & \multirow{2}{*}{2}  & 1.001       & 0.088     & 0.000     & 45      & 162.17  &\multirow{2}{*}{20}&1.006&1.709&0.000&5914&994.04\\
                       &                     &                     & 1.001       & 2.357     & 1.810     & 9494    & --      &&1.003&1.639&1.408&20091&--\\ \cline{2-14}
                       & \multirow{2}{*}{50} & \multirow{2}{*}{2}  & 1.000       & 0.000     & 0.000     & 0       & 4.99    &\multirow{2}{*}{17}&1.004&1.753&0.000&8437&1072.50\\
                       &                     &                     & 1.001       & 2.138     & 1.706     & 6766    & --      &&1.002&1.774&1.577&30276&--\\ \cline{2-14}
 \hline\hline
AVG                    &                     & 229                 & 1.005       & 2.792     & 0.000     & 3561    & 665.36  &505 & 1.007& 3.770 & 0.000 & 3210 & 418.86\\
 \hline
\end{tabular}
\end{table}
\end{landscape}

\subsection {Results on the TDRPP}
 \cite{calogiuri2019branch} devised lower and upper bounds for the TDRPP. The proposed bounds were embedded in a branch-and-bound algorithm.  In particular, they enhanced the branching scheme of the  algorithm by \cite{arigliano2018branch}  in order to deal with quickest paths linking distinct required arcs. \\
 We embedded our bounds in such optimal solution algorithm. Both procedures are implemented in C++ and run on MacBook Pro-with a 2.33-GHz Intel Core 2 Duo processor and 4 GB of memory. Linear programs are solved with Cplex 12.9. In particular the algorithm by  \cite{avila2015new}  was used to solve the time-invariant routing problem at the second step of the proposed lower bounding procedure.\\
We consider the  instances that \cite{calogiuri2019branch} generated in two phases. First, a complete graph $G = (V, A, \tau )$ was generated where the $|V|$ locations are randomly generated as reported in  \cite{Cordeau2014}. For each possible value of $|V|$ in the set $\{20, 40, 60\}$. The set of arcs A was determined by selecting $|V | + \alpha \times |V |$ arcs from $A$ as follows. The set  A was first initialized with the $|V|$ arcs of an Hamiltonian tour on $G$ . Then  a number of arcs $\alpha \times |V|$ are added, where $\alpha\in \{0.2; 0.6; 1\}$. The required arcs set $R$ was generated by randomly selecting a percentage $\beta$ of arcs in $A$, with $\beta \in \{0.3; 0.5; 0.7\}$. 
By combining 30 instances for each value of $|V|$, three values of $|V|$, three values of $\alpha$ and three values of $\beta$, \cite{calogiuri2019branch} obtained 810 different combinations of $(G, R)$. For each combination $(G, R)$, they generated 3 different traffic patterns, characterized by three different values of traffic congestion, i.e. $\Delta=\{0.7,0.8,0.9\}.$ 
Results are reported in Tables \ref{TDRPP_alpha03},\ref{TDRPP_alpha05} and \ref{TDRPP_alpha07} with the same column headings of TDTSP results.
Computational results show that our algorithm is capable of solving \textcolor{black}{1772} instances out of \textcolor{black}{2430} instances while the procedure by \cite{calogiuri2019branch} solved only \textcolor{black}{1618} problems. The improvement is remarkable for computational time, which was on average equal to  \textcolor{black}{98} seconds  for our procedure and \textcolor{black}{189} seconds for  \cite{calogiuri2019branch}. This can be explained by the quality of the new lower bound as well as by the low computational effort spent to solve the linear program (\ref{obj})-(\ref{c7}), which was formulated by including in the set $\mathcal{B}$ only 75 time instants. 
Indeed, the new lower bound reduces the overall number of visited branch-and-bound nodes:  on average our procedure processes about \textcolor{black}{4394} nodes (i.e. RPP instances) less than  \cite{calogiuri2019branch}.

\section {Conclusion}\label{sec_6}
This paper has introduced a key property of time-dependent graphs that we have called "path ranking invariance''. We have shown that this property can be exploited in order to solve a \textcolor{black}{large} class of time-dependent routing problems, including the \textit{Time-Dependent Travelling Salesman Problem} and the \textit{Time-Dependent Rural Postman Problem}. We have also proved that the path ranking invariance can be checked  by solving a decision problem, named \textit{Constant Traversal Cost Problem}. When such check fails,  we have defined a new family of parameterized lower and upper bounds, whose parameters can be  chosen by solving a linear program. Computational results show that, used in a branch-and-bound algorithm, this mechanism outperforms state-of-the-arts optimal algorithms. Future work will focus on adapting the new ideas to other time-dependent routing problems.
\begin{sidewaystable}[h]
\renewcommand{\tabcolsep}{3pt}
\caption{Pattern B. $\beta = 0.3$} \label{TDRPP_alpha03}
\begin{center}
\scriptsize
\begin{tabular}{|c|c|c||c|c|c|c|c|c||c|c|c|c|c|c|}
\hline
\multicolumn{3}{|c||}{}                       & \multicolumn{6}{|c||}{ \cite{calogiuri2019branch} branch-and-bound} & \multicolumn{6}{|c|}{ \cite{calogiuri2019branch} branch-and-bound with the new LB}                                                                                                          \\ \hline

\cline{1-15}
Delta                 & $|V|$               & $|A|$                & $OPT$          & $UB_I/LB_F$ & $GAP_I$ & $GAP_F$ & $NODES$ & $TIME$ & $OPT$ & $UB_I/LB_F$ & $GAP_I$ & $GAP_F$ & $NODES$ & $TIME$ \\ \cline{1-15}
\multirow{14}{*}{0.9} & \multirow{4}{*}{20} & 24                   & 30                  & 1.001       & 4.37         & 0.00         & 11           & 0.40        & 30         & 1.000       & 0.52       & 0.00       & 9            & 0.38        \\ \cline{3-15}
                      &                     & 32                   & 30                  & 1.000       & 4.83         & 0.00         & 1173         & 30.57       & 30         & 1.000       & 0.48       & 0.00       & 80           & 3.17        \\ \cline{3-15}
                      &                     & \multirow{2}{*}{40}  & \multirow{2}{*}{29} & 1.000       & 4.26         & 0.00         & 835          & 22.81       & \multirow{2}{*}{29}         & 1.000       & 0.45       & 0.00       & 763          & 21.79       \\
                      &                     &                      &                     & 1.000       & 5.19         & 0.30         & -            & -           &           & 1.000       & 0.14       & 0.14       & -            & -           \\ \cline{2-15}
                      & \multirow{5}{*}{40} & 48                   & 30                  & 1.000       & 4.39         & 0.00         & 127          & 3.80        & 30         & 1.001       & 0.46       & 0.00       & 65           & 2.18        \\ \cline{3-15}
                      &                     & \multirow{2}{*}{64}  & \multirow{2}{*}{27} & 1.000       & 4.49         & 0.00         & 5212         & 148.00      & \multirow{2}{*}{27}         & 1.000       & 0.45       & 0.00       & 2294         & 83.66       \\
                      &                     &                      &                     & 1.000       & 5.47         & 0.57         & -            & -           &           & 1.000       & 1.06       & 1.04       & -            & -           \\ \cline{3-15}
                      &                     & \multirow{2}{*}{80}  & \multirow{2}{*}{26} & 0.01        & 4.25         & 0.00         & 8415         & 216.75      & \multirow{2}{*}{27}         & 1.000       & 0.48       & 0.00       & 0            & 0.11        \\
                      &                     &                      &                     & 1.000       & 5.42         & 0.54         & -            & -           &           & 1.000       & 0.25       & 0.24       & -            & -           \\ \cline{2-15}
                      & \multirow{5}{*}{60} & 72                   & 30                  & 1.001       & 4.73         & 0.00         & 1029         & 29.05       & 30         & 1.000       & 0.38       & 0.00       & 993          & 31.80       \\ \cline{3-15}
                      &                     & \multirow{2}{*}{96}  & \multirow{2}{*}{24} & 1.000       & 4.61         & 0.00         & 1927         & 57.87       & \multirow{2}{*}{25}         & 1.000       & 0.09       & 0.00       & 2            & 0.18        \\
                      &                     &                      &                     & 1.000       & 5.78         & 0.51         & -            & -           &           & 1.000       & 0.65       & 0.65       & -            & -           \\ \cline{3-15}
                      &                     & \multirow{2}{*}{120} & \multirow{2}{*}{28} & 1.000       & 4.81         & 0.00         & 2513         & 82.47       & \multirow{2}{*}{30}         & \multirow{2}{*}{1.000}       & \multirow{2}{*}{0.27}       & \multirow{2}{*}{0.00}       & \multirow{2}{*}{6}            & \multirow{2}{*}{0.47}        \\
                      &                     &                      &                     & 1.000       & 5.57         & 0.43         & -            & -           &            &             &              &              &              &             \\ \cline{1-15}
\multirow{13}{*}{0.8} & \multirow{3}{*}{20} & 24                   & 30                  & 1.000       & 7.94         & 0.00         & 10           & 0.38        & 30         & 1.000       & 1.22       & 0.00       & 10           & 0.37        \\ \cline{3-15}
                      &                     & 32                   & 30                  & 1.000       & 8.07         & 0.00         & 151          & 3.79        & 30         & 1.001       & 1.16       & 0.00       & 79           & 3.40        \\ \cline{3-15}
                      &                     & 40                   & 30                  & 1.000       & 7.58         & 0.00         & 3626         & 86.90       & 30         & 1.000       & 1.26       & 0.00       & 875          & 34.78       \\ \cline{2-15}
                      & \multirow{5}{*}{40} & 48                   & 30                  & 1.000       & 8.29         & 0.00         & 71           & 1.95        & 30         & 1.000       & 1.22       & 0.00       & 46           & 2.01        \\ \cline{3-15}
                      &                     & \multirow{2}{*}{64}  & \multirow{2}{*}{27} & 1.001       & 7.63         & 0.00         & 9451         & 269.27      & \multirow{2}{*}{27}         & 1.000       & 0.76       & 0.00       & 5124         & 211.62      \\
                      &                     &                      &                     & 1.002       & 8.86         & 1.75         & -            & -           &           & 1.000       & 1.88       & 1.71       & -            & -           \\ \cline{3-15}
                      &                     & \multirow{2}{*}{80}  & \multirow{2}{*}{26} & 1.000       & 7.73         & 0.00         & 8618         & 183.08      & \multirow{2}{*}{27}         & 1.000       & 0.69       & 0.00       & 7            & 0.35        \\
                      &                     &                      &                     & 1.000       & 8.38         & 0.84         & -            & -           &           & 1.000       & 0.90       & 0.80       & -            & -           \\ \cline{2-15}
                      & \multirow{5}{*}{60} & 72                   & 30                  & 1.000       & 8.84         & 0.00         & 260          & 8.21        & 30         & 1.001       & 0.90       & 0.00       & 162          & 5.58        \\ \cline{3-15}
                      &                     & \multirow{2}{*}{96}  & \multirow{2}{*}{22} & 1.000       & 7.81         & 0.00         & 5372         & 146.27      & \multirow{2}{*}{25}         & 1.000       & 0.66       & 0.00       & 6            & 0.48        \\
                      &                     &                      &                     & 1.000       & 8.90         & 1.33         & -            & -           &           & 1.000       & 1.02       & 1.00       & -            & -           \\ \cline{3-15}
                      &                     & \multirow{2}{*}{120} & \multirow{2}{*}{25} & 1.000       & 7.06         & 0.00         & 4872         & 124.92      & \multirow{2}{*}{27}         & 1.000       & 1.01       & 0.00       & 2            & 0.32        \\
                      &                     &                      &                     & 1.000       & 9.29         & 0.86         & -            & -           &           & 1.000       & 2.01       & 2.00       & -            & -           \\ \cline{1-15}
\multirow{14}{*}{0.7} & \multirow{4}{*}{20} & 24                   & 30                  & 1.001       & 10.51        & 0.00         & 9            & 0.38        & 30         & 1.000       & 2.25       & 0.00       & 7           & 0.36        \\ \cline{3-15}
                      &                     & 32                   & 30                  & 1.000       & 10.88        & 0.00         & 385          & 11.69       & 30         & 1.001       & 2.09       & 0.00       & 318          & 11.03       \\ \cline{3-15}
                      &                     & \multirow{2}{*}{40}  & \multirow{2}{*}{29} & 1.000       & 12.48        & 0.00         & 4918         & 145.27      & \multirow{2}{*}{30}         & \multirow{2}{*}{1.000}       & \multirow{2}{*}{1.01}       & \multirow{2}{*}{0.00}       & \multirow{2}{*}{3332}         & \multirow{2}{*}{127.09}      \\
                      &                     &                      &                     & 1.000       & 14.89        & 2.47         & -            & -           &            &             &              &              &              &             \\ \cline{2-15}
                      & \multirow{5}{*}{40} & 48                   & 30                  & 1.002       & 11.02        & 0.00         & 37           & 1.37        & 30         & 1.001       & 1.64       & 0.00       & 31           & 1.23        \\ \cline{3-15}
                      &                     & \multirow{2}{*}{64}  & \multirow{2}{*}{25} & 1.000       & 12.62        & 0.00         & 13685        & 441.98      & \multirow{2}{*}{27}         & 1.000       & 2.55       & 0.00       & 4816         & 189.47      \\
                      &                     &                      &                     & 1.001       & 15.86        & 2.96         & -            & -           &           & 1.002       & 1.20       & 0.96       & -            & -           \\ \cline{3-15}
                      &                     & \multirow{2}{*}{80}  & \multirow{2}{*}{24} & 1.000       & 11.27        & 0.00         & 21551        & 481.22      & \multirow{2}{*}{26}         & 1.000       & 1.79       & 0.00       & 1818         & 74.14       \\
                      &                     &                      &                     & 1.001       & 13.85        & 1.13         & -            & -           &           & 1.000       & 0.55       & 0.52       & -            & -           \\ \cline{2-15}
                      & \multirow{5}{*}{60} & 72                   & 30                  & 1.001       & 10.49        & 0.00         & 499          & 14.38       & 30         & 1.001       & 1.45       & 0.00       & 352          & 13.37       \\ \cline{3-15}
                      &                     & \multirow{2}{*}{96}  & \multirow{2}{*}{20} & 1.000       & 12.58        & 0.00         & 21673        & 720.30      & \multirow{2}{*}{26}         & 1.000       & 1.43       & 0.00       & 2676         & 120.68      \\
                      &                     &                      &                     & 1.003       & 14.36        & 1.70         & -            & -           &           & 1.001       & 0.88       & 0.79       & -            & -           \\ \cline{3-15}
                      &                     & \multirow{2}{*}{120} & \multirow{2}{*}{24} & 1.000       & 12.08        & 0.00         & 25235        & 600.53      & \multirow{2}{*}{30}         & \multirow{2}{*}{1.000}       & \multirow{2}{*}{1.17}       & \multirow{2}{*}{0.00}       & \multirow{2}{*}{198}          & \multirow{2}{*}{12.44}       \\
                      &                     &                      &                     & 1.000       & 13.52        & 1.04         & -            & -           &            &             &              &              &              &             \\ \cline{1-15}
\multicolumn{3}{|c||}{AVG}                 & 746                 & 1.000       & 8.11         & 1.23         & 4668         & 125.71      & 773        & 1.000       & 1.03       & 0.04       & 862          & 34.04       \\ \cline{1-15}
\end{tabular}
\end{center}
\end{sidewaystable}

\begin{sidewaystable}[h]
\renewcommand{\tabcolsep}{3pt}
\caption{Pattern B. $\beta = 0.5$} \label{TDRPP_alpha05}
\begin{center}
\scriptsize
\begin{tabular}{|c|c|c||c|c|c|c|c|c||c|c|c|c|c|c|}
\hline
\multicolumn{3}{|c||}{}                       & \multicolumn{6}{|c||}{ \cite{calogiuri2019branch} branch-and-bound }& \multicolumn{6}{|c|}{ \cite{calogiuri2019branch} branch-and-bound with the new LB}                                                                                                          \\ \hline

\cline{1-15}
Delta                 & $|V|$               & $|A|$                & $OPT$          & $UB_I/LB_F$ & $GAP_I$ & $GAP_F$ & $NODES$ & $TIME$ & $OPT$ & $UB_I/LB_F$ & $GAP_I$ & $GAP_F$ & $NODES$ & $TIME$ \\ \cline{1-15}
\multirow{14}{*}{0.9} & \multirow{4}{*}{20} & 24                   & 30                  & 1.000       & 3.43         & 0.00         & 34           & 0.98        & 30         & 0.98      & 0.00       & 1.002        & 28           & 0.87        \\ \cline{3-15}
                      &                     & 32                   & 30                  & 1.001       & 4.17         & 0.00         & 2721         & 67.72       & 30         & 0.75      & 0.00       & 1.001        & 1083         & 35.80       \\ \cline{3-15}
                      &                     & \multirow{2}{*}{40}  & \multirow{2}{*}{22} & 1.000       & 3.81         & 0.00         & 22743        & 614.20      & \multirow{2}{*}{22}         & 0.55      & 0.00       & 1.000        & 13675        & 504.69      \\
                      &                     &                      &                     & 1.001       & 4.21         & 0.78         & -            & -           &           & 0.97      & 0.76       & 1.002        & -            & -           \\ \cline{2-15}
                      & \multirow{5}{*}{40} & 48                   & 30                  & 1.001       & 3.54         & 0.00         & 402          & 10.34       & 30         & 0.76      & 0.00       & 1.001        & 265          & 7.80        \\ \cline{3-15}
                      &                     & \multirow{2}{*}{64}  & \multirow{2}{*}{12} & 1.001       & 2.97         & 0.00         & 16729        & 453.01      & \multirow{2}{*}{17}         & 0.43      & 0.00       & 1.046        & 4588         & 245.90      \\
                      &                     &                      &                     & 1.001       & 3.85         & 0.75         & -            & -           &          & 0.92      & 0.73       & 1.002        & -            & -           \\ \cline{3-15}
                      &                     & \multirow{2}{*}{80}  & \multirow{2}{*}{9}  & 1.000       & 2.89         & 0.00         & 10158        & 260.85      & \multirow{2}{*}{16}         & 0.26      & 0.00       & 1.000        & 423          & 63.74       \\
                      &                     &                      &                     & 1.001       & 3.68         & 0.62         & -            & -           &          & 0.82      & 0.73       & 1.001        & -            & -           \\ \cline{2-15}
                      & \multirow{5}{*}{60} & 72                   & 30                  & 1.001       & 3.69         & 0.00         & 4050         & 116.01      & 30         & 0.69      & 0.00       & 1.001        & 1774         & 68.78       \\ \cline{3-15}
                      &                     & \multirow{2}{*}{96}  & \multirow{2}{*}{9}  & 1.000       & 3.27         & 0.00         & 30815        & 810.22      & \multirow{2}{*}{17}         & 0.25      & 0.00       & 1.000        & 21           & 0.43        \\
                      &                     &                      &                     & 1.001       & 3.78         & 0.54         & -            & -           &          & 0.72      & 0.66       & 1.001        & -            & -           \\ \cline{3-15}
                      &                     & \multirow{2}{*}{120} & \multirow{2}{*}{6}  & 1.000       & 3.23         & 0.00         & 8926         & 293.97      & \multirow{2}{*}{17}         & 0.26      & 0.00       & 1.000        & 14           & 0.37        \\
                      &                     &                      &                     & 1.001       & 3.77         & 0.53         & -            & -           &          & 0.69      & 0.62       & 1.001        & -            & -           \\ \cline{1-15}
\multirow{14}{*}{0.8} & \multirow{4}{*}{20} & 24                   & 30                  & 1.001       & 6.87         & 0.00         & 29           & 0.95        & 30         & 1.71      & 0.00       & 1.002        & 28           & 0.88        \\ \cline{3-15}
                      &                     & 32                   & 30                  & 1.002       & 7.15         & 0.00         & 5582         & 161.30      & 30         & 1.34      & 0.00       & 1.001        & 4527         & 140.48      \\ \cline{3-15}
                      &                     & \multirow{2}{*}{40}  & \multirow{2}{*}{22} & 1.001       & 5.96         & 0.00         & 24237        & 690.29      & \multirow{2}{*}{22}         & 1.39      & 0.00       & 1.001        & 8327         & 275.92      \\
                      &                     &                      &                     & 1.003       & 7.36         & 2.04         & -            & -           &           & 1.28      & 1.16       & 1.001        & -            & -           \\ \cline{2-15}
                      & \multirow{5}{*}{40} & 48                   & 30                  & 1.002       & 7.32         & 0.00         & 388          & 11.28       & 30         & 1.85      & 0.00       & 1.002        & 277          & 8.84        \\ \cline{3-15}
                      &                     & \multirow{2}{*}{64}  & \multirow{2}{*}{11} & 1.001       & 5.78         & 0.00         & 8191         & 239.93      & \multirow{2}{*}{11}         & 1.26      & 0.00       & 1.001        & 346          & 48.70       \\
                      &                     &                      &                     & 1.002       & 7.24         & 1.61         & -            & -           &          & 1.68      & 1.28       & 1.004        & -            & -           \\ \cline{3-15}
                      &                     & \multirow{2}{*}{80}  & \multirow{2}{*}{7}  & 1.000       & 5.95         & 0.00         & 21834        & 625.63      & \multirow{2}{*}{9}          & 0.68      & 0.00       & 1.000        & 182          & 4.90        \\
                      &                     &                      &                     & 1.001       & 7.31         & 1.21         & -            & -           &          & 1.19      & 1.09       & 1.001        & -            & -           \\ \cline{2-15}
                      & \multirow{5}{*}{60} & 72                   & 30                  & 1.003       & 7.51         & 0.00         & 6422         & 182.79      & 30         & 1.40      & 0.00       & 1.002        & 3105         & 149.92      \\ \cline{3-15}
                      &                     & \multirow{2}{*}{96}  & \multirow{2}{*}{3}  & 1.000       & 5.08         & 0.00         & 53358        & 1740.74     & \multirow{2}{*}{11}         & 0.11      & 0.00       & 1.000        & 23           & 0.49        \\
                      &                     &                      &                     & 1.001       & 7.43         & 1.18         & -            & -           &          & 1.30      & 1.22       & 1.001        & -            & -           \\ \cline{3-15}
                      &                     & \multirow{2}{*}{120} & \multirow{2}{*}{3}  & 1.002       & 5.19         & 0.00         & 45761        & 1563.13     & \multirow{2}{*}{13}         & 1.49      & 0.00       & 1.001        & 100          & 5.18        \\
                      &                     &                      &                     & 1.000       & 7.42         & 1.02         & -            & -           &          & 1.19      & 1.10       & 1.001        & -            & -           \\ \cline{1-15}
\multirow{14}{*}{0.7} & \multirow{4}{*}{20} & 24                   & 30                  & 1.001       & 9.66         & 0.00         & 26           & 0.91        & 30         & 2.75      & 0.00       & 1.003        & 22           & 0.84        \\ \cline{3-15}
                      &                     & 32                   & 30                  & 1.003       & 10.54        & 0.00         & 2380         & 65.99       & 30         & 1.95      & 0.00       & 1.002        & 958          & 58.25       \\ \cline{3-15}
                      &                     & \multirow{2}{*}{40}  & \multirow{2}{*}{17} & 1.001       & 10.27        & 0.00         & 15167        & 411.94      & \multirow{2}{*}{19}         & 1.29      & 0.00       & 1.001        & 8785         & 308.12      \\
                      &                     &                      &                     & 1.002       & 12.70        & 2.23         & -            & -           &          & 2.01      & 1.75       & 1.002        & -            & -           \\ \cline{2-15}
                      & \multirow{5}{*}{40} & 48                   & 30                  & 1.002       & 10.28        & 0.00         & 669          & 19.07       & 30         & 1.73      & 0.00       & 1.004        & 374          & 17.72       \\ \cline{3-15}
                      &                     & \multirow{2}{*}{64}  & \multirow{2}{*}{10} & 1.001       & 9.98         & 0.00         & 9469         & 276.89      & \multirow{2}{*}{12}         & 1.32      & 0.00       & 1.001        & 5151         & 174.52      \\
                      &                     &                      &                     & 1.001       & 10.79        & 2.15         & -            & -           &          & 2.47      & 2.24       & 1.002        & -            & -           \\ \cline{3-15}
                      &                     & \multirow{2}{*}{80}  & \multirow{2}{*}{4}  & 1.000       & 9.56         & 0.00         & 9839         & 252.58      & \multirow{2}{*}{11}         & 1.31      & 0.00       & 1.000        & 1757         & 80.37       \\
                      &                     &                      &                     & 1.001       & 10.38        & 1.13         & -            & -           &          & 1.39      & 1.29       & 1.001        & -            & -           \\ \cline{2-15}
                      & \multirow{5}{*}{60} & 72                   & 30                  & 1.004       & 9.58         & 0.00         & 7900         & 224.52      & 30         & 1.95      & 0.00       & 1.004        & 4776         & 197.43      \\ \cline{3-15}
                      &                     & \multirow{2}{*}{96}  & \multirow{2}{*}{3}  & 1.002       & 9.57         & 0.00         & 12456        & 313.75      & \multirow{2}{*}{10}         & 1.23      & 0.00       & 1.001        & 1887         & 76.50       \\
                      &                     &                      &                     & 1.001       & 10.94        & 1.52         & -            & -           &          & 1.82      & 1.64       & 1.002        & -            & -           \\ \cline{3-15}
                      &                     & \multirow{2}{*}{120} & \multirow{2}{*}{1}  & 1.000       & 10.45        & 0.00         & 14827        & 500.38      & \multirow{2}{*}{3}          & 1.68      & 0.00       & 1.000        & 9            & 2.09        \\
                      &                     &                      &                     & 1.001       & 11.62        & 0.91         & -            & -           &          & 1.39      & 1.29       & 1.001        & -            & -           \\ \cline{1-15}
\multicolumn{3}{|c||}{AVG}              & 499                 & 1.001       & 7.07         & 1.16         & 7132         & 201.50      & 570        & 1.28      & 0.36       & 1.002        & 2386         & 93.46     \\ \cline{1-15}
\end{tabular}
\end{center}
\end{sidewaystable}

\begin{sidewaystable}[h]
\renewcommand{\tabcolsep}{3pt}
\caption{Pattern B. $\beta = 0.7$} \label{TDRPP_alpha07}
\begin{center}
\scriptsize
\begin{tabular}{|c|c|c||c|c|c|c|c|c||c|c|c|c|c|c|}
\hline
\multicolumn{3}{|c||}{}                       & \multicolumn{6}{|c||}{ \cite{calogiuri2019branch} branch-and-bound  }& \multicolumn{6}{|c|}{ \cite{calogiuri2019branch} branch-and-bound with the new LB}                                                                                                          \\ \hline

\cline{1-15}
Delta                 & $|V|$               & $|A|$                & $OPT$          & $UB_I/LB_F$ & $GAP_I$ & $GAP_F$ & $NODES$ & $TIME$ & $OPT$ & $UB_I/LB_F$ & $GAP_I$ & $GAP_F$ & $NODES$ & $TIME$ \\ \cline{1-15}
\multirow{15}{*}{0.9} & \multirow{4}{*}{20} & 24                   & 30                  & 1.002       & 2.80         & 0.00         & 46           & 1.41        & 30         & 1.002       & 1.16       & 0.00       & 38           & 1.19        \\ \cline{3-15}
                      &                     & 32                   & 30                  & 1.002       & 3.10         & 0.00         & 11420        & 287.06      & 30         & 1.003       & 0.92       & 0.00       & 5156         & 236.80      \\ \cline{3-15}
                      &                     & \multirow{2}{*}{40}  & \multirow{2}{*}{5}  & 1.002       & 2.26         & 0.00         & 15285        & 426.92      & \multirow{2}{*}{9}          & 1.002       & 0.50       & 0.00       & 1767         & 153.81      \\
                      &                     &                      &                     & 1.002       & 2.94         & 1.00         & -            & -           &            & 1.002       & 0.87       & 0.64       & -            & -           \\ \cline{2-15}
                      & \multirow{5}{*}{40} & 48                   & 30                  & 1.002       & 2.63         & 0.00         & 889          & 22.72       & 30         & 1.001       & 0.97       & 0.00       & 622          & 20.67       \\ \cline{3-15}
                      &                     & \multirow{2}{*}{64}  & \multirow{2}{*}{2}  & 1.003       & 2.52         & 0.00         & 17516        & 481.34      & \multirow{2}{*}{6}          & 1.002       & 0.43       & 0.00       & 713          & 122.18      \\
                      &                     &                      &                     & 1.003       & 3.22         & 0.88         & -            & -           &            & 1.002       & 0.87       & 0.67       & -            & -           \\ \cline{3-15}
                      &                     & \multirow{2}{*}{80}  & \multirow{2}{*}{2}  & 1.001       & 2.57         & 0.00         & 10641        & 258.96      & \multirow{2}{*}{10}         & 1.085       & 0.36       & 0.00       & 2163         & 78.87       \\
                      &                     &                      &                     & 1.001       & 2.99         & 0.77         & -            & -           &            & 1.001       & 0.79       & 0.72       & -            & -           \\ \cline{2-15}
                      & \multirow{6}{*}{60} & \multirow{2}{*}{72}  & \multirow{2}{*}{27} & 1.003       & 2.88         & 0.00         & 21554        & 544.05      & \multirow{2}{*}{27}         & 1.003       & 0.93       & 0.00       & 10480        & 377.76      \\
                      &                     &                      &                     & 1.002       & 3.17         & 0.13         & -            & -           &            & 1.002       & 1.11       & 0.87       & -            & -           \\ \cline{3-15}
                      &                     & \multirow{2}{*}{96}  & \multirow{2}{*}{1}  & 1.001       & 2.63         & 0.00         & 43817        & 1029.50     & \multirow{2}{*}{9}          & 1.001       & 0.40       & 0.00       & 942          & 53.02       \\
                      &                     &                      &                     & 1.002       & 2.99         & 0.80         & -            & -           &            & 1.001       & 0.85       & 0.74       & -            & -           \\ \cline{3-15}
                      &                     & \multirow{2}{*}{120} & \multirow{2}{*}{1}  & 1.001       & 1.94         & 0.00         & 51691        & 1283.04     & \multirow{2}{*}{9}          & 1.000       & 0.34       & 0.00       & 56           & 158.40      \\
                      &                     &                      &                     & 1.001       & 3.05         & 0.81         & -            & -           &            & 1.001       & 0.87       & 0.81       & -            & -           \\ \cline{1-15}
\multirow{16}{*}{0.8} & \multirow{5}{*}{20} & 24                   & 30                  & 1.003       & 5.94         & 0.00         & 51           & 1.77        & 30         & 1.002       & 2.04       & 0.00       & 46           & 1.63        \\ \cline{3-15}
                      &                     & \multirow{2}{*}{32}  & \multirow{2}{*}{29} & 1.003       & 6.09         & 0.00         & 14295        & 417.48      & \multirow{2}{*}{29}         & 1.004       & 1.60       & 0.00       & 9416         & 391.95      \\
                      &                     &                      &                     & 1.002       & 6.35         & 1.92         & -            & -           &            & 1.001       & 1.80       & 1.72       & -            & -           \\ \cline{3-15}
                      &                     & \multirow{2}{*}{40}  & \multirow{2}{*}{4}  & 1.002       & 5.46         & 0.00         & 7475         & 207.79      & \multirow{2}{*}{9}          & 1.004       & 0.66       & 0.00       & 2763         & 110.56      \\
                      &                     &                      &                     & 1.005       & 6.25         & 2.07         & -            & -           &            & 1.003       & 1.74       & 1.40       & -            & -           \\ \cline{2-15}
                      & \multirow{5}{*}{40} & 48                   & 30                  & 1.003       & 6.10         & 0.00         & 1108         & 28.28       & 30         & 1.004       & 2.07       & 0.00       & 654          & 24.89       \\ \cline{3-15}
                      &                     & \multirow{2}{*}{64}  & \multirow{2}{*}{1}  & 1.002       & 3.96         & 0.00         & 9425         & 353.71      & \multirow{2}{*}{2}          & 1.803       & 1.41       & 0.00       & 5740         & 262.63      \\
                      &                     &                      &                     & 1.004       & 6.18         & 1.79         & -            & -           &            & 1.002       & 1.61       & 1.42       & -            & -           \\ \cline{3-15}
                      &                     & \multirow{2}{*}{80}  & \multirow{2}{*}{1}  & 1.002       & 4.71         & 0.00         & 676          & 24.31       & \multirow{2}{*}{4}          & 1.001       & 0.41       & 0.00       & 95           & 2.40        \\
                      &                     &                      &                     & 1.001       & 6.14         & 1.65         & -            & -           &            & 1.001       & 1.62       & 1.50       & -            & -           \\ \cline{2-15}
                      & \multirow{6}{*}{60} & \multirow{2}{*}{72}  & \multirow{2}{*}{28} & 1.005       & 6.27         & 0.00         & 17510        & 464.70      & \multirow{2}{*}{28}         & 1.006       & 1.93       & 0.00       & 10694        & 439.29      \\
                      &                     &                      &                     & 1.002       & 7.81         & 1.91         & -            & -           &            & 1.003       & 2.13       & 1.88       & -            & -           \\ \cline{3-15}
                      &                     & \multirow{2}{*}{96}  & \multirow{2}{*}{1}  & 1.001       & 3.82         & 0.00         & 44672        & 1354.95     & \multirow{2}{*}{2}          & 1.001       & 0.23       & 0.00       & 14041        & 450.08      \\
                      &                     &                      &                     & 1.001       & 6.22         & 1.61         & -            & -           &            & 1.003       & 1.57       & 1.30       & -            & -           \\ \cline{3-15}
                      &                     & \multirow{2}{*}{120} & \multirow{2}{*}{1}  & 1.001       & 4.72         & 0.00         & 94666        & 3480.03     & \multirow{2}{*}{5}          & 1.000       & 0.27       & 0.00       & 5            & 0.06        \\
                      &                     &                      &                     & 1.001       & 6.79         & 1.67         & -            & -           &            & 1.001       & 1.53       & 1.43       & -            & -           \\ \cline{1-15}
\multirow{16}{*}{0.7} & \multirow{5}{*}{20} & 24                   & 30                  & 1.004       & 8.83         & 0.00         & 43           & 1.27        & 30         & 1.004       & 3.01       & 0.00       & 38           & 1.19        \\ \cline{3-15}
                      &                     & \multirow{2}{*}{32}  & \multirow{2}{*}{29} & 1.005       & 9.62         & 0.00         & 8859.79      & 248.35      & \multirow{2}{*}{29}         & 1.005       & 2.19       & 0.00       & 4800         & 206.40      \\
                      &                     &                      &                     & 1.000       & 12.73        & 9.34         & -            & -           &            & 1.000       & 1.65       & 1.65       & -            & -           \\ \cline{3-15}
                      &                     & \multirow{2}{*}{40}  & \multirow{2}{*}{3}  & 1.001       & 8.94         & 0.00         & 29652        & 810.98      & \multirow{2}{*}{6}          & 1.003       & 0.42       & 0.00       & 7            & 0.09        \\
                      &                     &                      &                     & 1.005       & 9.55         & 2.98         & -            & -           &            & 1.003       & 2.08       & 1.80       & -            & -           \\ \cline{2-15}
                      & \multirow{5}{*}{40} & 48                   & 30                  & 1.008       & 9.35         & 0.00         & 1294         & 32.89       & 30         & 1.005       & 2.21       & 0.00       & 579          & 26.87       \\ \cline{3-15}
                      &                     & \multirow{2}{*}{64}  & \multirow{2}{*}{1}  & 1.002       & 6.17         & 0.00         & 15673        & 604.58      & \multirow{2}{*}{2}          & 1.000       & 0.10       & 0.00       & 3            & 0.25        \\
                      &                     &                      &                     & 1.002       & 9.17         & 3.00         & -            & -           &            & 1.003       & 2.31       & 1.96       & -            & -           \\ \cline{3-15}
                      &                     & \multirow{2}{*}{80}  & \multirow{2}{*}{1}  & 1.007       & 9.53         & 0.00         & 17939        & 662.74      & \multirow{2}{*}{2}          & 1.001       & 0.20       & 0.00       & 25           & 1.27        \\
                      &                     &                      &                     & 1.001       & 9.91         & 2.01         & -            & -           &            & 1.001       & 1.72       & 1.54       & -            & -           \\ \cline{2-15}
                      & \multirow{6}{*}{60} & \multirow{2}{*}{72}  & \multirow{2}{*}{26} & 1.008       & 8.72         & 0.00         & 18582        & 597.43      & \multirow{2}{*}{26}         & 1.010       & 2.63       & 0.00       & 12570        & 539.02      \\
                      &                     &                      &                     & 1.006       & 10.64        & 4.39         & -            & -           &            & 1.002       & 1.73       & 1.49       & -            & -           \\ \cline{3-15}
                      &                     & \multirow{2}{*}{96}  & \multirow{2}{*}{0}  & -           & -            & -            & -            & -           & \multirow{2}{*}{3}          & 1.002       & 0.28       & 0.00       & 15593        & 670.18      \\
                      &                     &                      &                     & 1.002       & 9.41         & 2.49         & -            & -           &            & 1.002       & 1.94       & 1.78       & -            & -           \\ \cline{3-15}
                      &                     & \multirow{2}{*}{120} & \multirow{2}{*}{0}  & -           & -            & -            & -            & -           & \multirow{2}{*}{2}          & 1.001       & 0.50       & 0.00       & 31           & 0.56        \\
                      &                     &                      &                     & 1.001       & 9.53         & 1.74         & -            & -           &            & 1.001       & 1.79       & 1.63       & -            & -           \\ \cline{1-15}
\multicolumn{3}{|c||}{AVG}                               & 373                 & 1.003       & 6.16         & 1.72         & 8586         & 239.89      & 429        & 1.006       & 1.54       & 0.63       & 3958         & 169.04      \\ \cline{1-15}
\end{tabular}
\end{center}
\end{sidewaystable}


  \section{Acknowledgments}
\noindent This research was partly supported by the Ministero delI'Istruzione, dell'Universit\`{a} e della Ricerca (MIUR) of Italy (PRIN project  2015JJLC3E\_005 ``Transportation and Logistics Optimization in the Era of Big and Open Data''). This support is gratefully acknowledged.

\bibliography{biblio.bib}
\bibliographystyle{plainnat}
\end{document}